\documentclass[journal,twocolumn]{IEEEtran}
\usepackage{amsmath}
\usepackage{lipsum}
\usepackage{graphicx}
\usepackage[utf8]{inputenc}
\usepackage{siunitx}
\usepackage{upgreek}
\usepackage{stackengine}
\usepackage{algorithmic}
\usepackage{algorithm,algorithmic} 
\usepackage{amssymb}
\usepackage{optidef}
\usepackage{xcolor}
\usepackage{enumitem}
\usepackage[english]{babel}
\usepackage{amsthm}
\usepackage{amssymb}
\newtheorem{theorem}{Theorem}[section]
\newtheorem{lemma}[theorem]{Lemma}
\usepackage{authblk}
\usepackage{cuted}
\newlist{steps}{enumerate}{1}
\setlist[steps, 1]{label = \textbf{Step} \arabic*:}
\usepackage{cite}
\usepackage[nolist]{acronym}
\usepackage{tabularx}
\usepackage{pgfplotstable}
\usepgfplotslibrary{patchplots}
\usepackage{authblk}
\usepackage{dblfloatfix}    
\usepackage{caption}
\usepackage{subcaption} 
\hyphenation{}
\usepackage[capitalise]{cleveref}
\Crefname{equation}{Eq.\!}{Eqs.\!}
\Crefname{figure}{Fig.\!}{Figs.\!}
\Crefname{tabular}{Tab.\!}{Tabs.\!}
\Crefname{section}{Section\!}{Sections.\!}
\Crefname{appendices}{Appendix\!}{Appendix.\!}

\usepackage{booktabs}

\begin{document}
\def\GHz{\text{GHz}} 
\def\THz{\text{THz}} 
\def\Nled{N}  
\def\scheme{\text{proposed}} 
\def\Pmatrix{\mathbf{C}}  
\def\Rfec{\mathrm{R}_{\text{FEC}}}
\def\Rbf{\mathrm{R}_{\text{bf}}}
\def\Rfecone{\mathrm{R}_{\text{FEC}_{1}}}
\def\Rfectwo{\mathrm{R}_{\text{FEC}_{2}}}
\def\Pbcf{c}
\def\DeltaIntf{\overline{\mathbf{\Delta}}_{j+1}}
\def\PdIntrf{\overline{\mathbf{p}}_{j+1}}
\def\Roimac{\mathrm{R}_{j}\!\left(\Delta_{j},\mathbf{p}_{j},\AIntrf\right)}
\def\Ru{\mathrm{R}_{j}\!\left(\Delta_{j},\mathbf{p}^{u}_{j},\AIntrf\right)}
\def\Ruopt{\mathrm{R}_{j}\!\left(\Delta_{j},\mathbf{p}^{u}_{j},\AIntrf^{\star}\right)}
\def\Rsdt{\mathrm{R}^{\mathrm{SDT}}_{j}\!\left(\Delta_{j},\mathbf{p}_{j},\AIntrf\right)}
\def\Rsdtopt{\mathrm{R}^{\mathrm{SDT}}_{j}\!\left(\Delta_{j},\mathbf{p}_{j},\AIntrf^{\star}\right)}
\def\RsdtFinal{\mathrm{R}^{\mathrm{SDT}}_{j}\!\left(\Delta^{\star}_{j},\mathbf{p}^{\star}_{j},\AIntrf^{\star}\right)}
\def\RsdtoptOne{\mathrm{R}^{\mathrm{SDT}}_{1}}
\def\RsdtoptTwo{\mathrm{R}^{\mathrm{SDT}}_{2}}
\def\CapNOMA{\mathrm{C}_{j}}

\def\AIntrf{\mathbf{a}_{\widehat{y}_{j}}}
\def\Phip{\Phi\left(\mathbf{p}_{j},\hat{\mathbf{p}}_{j}\right)}
\def\Wiphat{W\!\left(i,\Delta_{j},\hat{\mathbf{p}}_{j},\AIntrf^{\star}\right)}
\def\Wiprimephat{W\!\left(i^{\prime},\Delta_{j},\hat{\mathbf{p}}_{j},\AIntrf^{\star}\right)}
\def\Wip{W\!\!\left(i,\Delta_{j},\mathbf{p}_{j},\AIntrf^{\star}\right)}

\def\CondOnePower{c^{\mathrm{p}}}
\def\CondOneRate{c^{\mathrm{r}}}

\begin{acronym}

\acro{5G-NR}{5G New Radio}
\acro{3GPP}{3rd Generation Partnership Project}
\acro{AC}{address coding}
\acro{ACF}{autocorrelation function}
\acro{ACR}{autocorrelation receiver}
\acro{ADC}{analog-to-digital converter}
\acrodef{aic}[AIC]{Analog-to-Information Converter}     
\acro{AIC}[AIC]{Akaike information criterion}
\acro{AIR}{achievable information rate}
\acro{aric}[ARIC]{asymmetric restricted isometry constant}
\acro{arip}[ARIP]{asymmetric restricted isometry property}

\acro{AR}{achievable rate}
\acro{ASR}{achievable sum rate}
\acro{AUB}{asymptotic union bound}
\acrodef{awgn}[AWGN]{Additive White Gaussian Noise}     
\acro{AWGN}{additive white Gaussian noise}

\acro{APSK}[PSK]{asymmetric PSK} 

\acro{waric}[AWRICs]{asymmetric weak restricted isometry constants}
\acro{warip}[AWRIP]{asymmetric weak restricted isometry property}

\acro{BAA}{Blahut-Arimoto algorithm}
\acro{BCH}{Bose, Chaudhuri, and Hocquenghem}        
\acro{BCHC}[BCHSC]{BCH based source coding}
\acro{BEP}{bit error probability}
\acro{BFC}{block fading channel}
\acro{BG}[BG]{Bernoulli-Gaussian}
\acro{BGG}{Bernoulli-Generalized Gaussian}
\acro{BMD}{bit metric decoder}
\acro{BPAM}{binary pulse amplitude modulation}
\acro{BPDN}{Basis Pursuit Denoising}
\acro{bpcu}{bits per channel use}
\acro{BPPM}{binary pulse position modulation}
\acro{BPSK}{binary phase shift keying}
\acro{BPZF}{bandpass zonal filter}
\acro{BRYC}{blue, red, yellow, and cyan}
\acro{BSC}{binary symmetric channels}              
\acro{BU}[BU]{Bernoulli-uniform}
\acro{BER}{bit error rate}
\acro{BS}{base station}

\acro{CP}{Cyclic Prefix}
\acrodef{cdf}[CDF]{cumulative distribution function}   
\acro{CDF}{cumulative distribution function}
\acrodef{c.d.f.}[CDF]{cumulative distribution function}
\acro{CCDF}{complementary cumulative distribution function}
\acrodef{ccdf}[CCDF]{complementary CDF}               
\acrodef{c.c.d.f.}[CCDF]{complementary cumulative distribution function}
\acro{CCDM}{constant composition distribution matcher}
\acro{CD}{cooperative diversity}

\acro{CDMA}{Code Division Multiple Access}
\acro{ch.f.}{characteristic function}
\acro{CIR}{channel impulse response}
\acro{CMIMO}{color-space multiple input and multiple output}
\acro{cosamp}[CoSaMP]{compressive sampling matching pursuit}
\acro{CR}{cognitive radio}
\acro{cs}[CS]{compressed sensing}                   
\acrodef{cscapital}[CS]{Compressed sensing} 
\acrodef{CS}[CS]{compressed sensing}
\acro{CSI}{channel state information}
\acro{CSK}{color-shift keying}
\acro{CCSDS}{consultative committee for space data systems}
\acro{CC}{convolutional coding}
\acro{Covid19}[COVID-19]{Coronavirus disease}

\acro{DAA}{detect and avoid}
\acro{DAB}{digital audio broadcasting}
\acro{dB}{decibel}
\acro{DC}{direct current}
\acro{DCT}{discrete cosine transform}
\acro{dft}[DFT]{discrete Fourier transform}
\acro{DM}{distribution matching}
\acro{DMT/OFDM}{discrete multitone or orthogonal frequency-division multiplexing}
\acro{DoF}{degree of freedom}
\acro{DR}{distortion-rate}
\acro{DS}{direct sequence}
\acro{DS-SS}{direct-sequence spread-spectrum}
\acro{DTR}{differential transmitted-reference}
\acro{DVB-H}{digital video broadcasting\,--\,handheld}
\acro{DVB-T}{digital video broadcasting\,--\,terrestrial}
\acro{DL}{downlink}
\acro{DSSS}{Direct Sequence Spread Spectrum}
\acro{DFT-s-OFDM}{Discrete Fourier Transform-spread-Orthogonal Frequency Division Multiplexing}
\acro{DAS}{distributed antenna system}
\acro{DNA}{Deoxyribonucleic Acid}

\acro{EC}{European Commission}
\acro{EE}{energy efficiency}
\acro{EED}[EED]{exact eigenvalues distribution}
\acro{EIRP}{Equivalent Isotropically Radiated Power}
\acro{ELP}{equivalent low-pass}
\acro{eMBB}{Enhanced Mobile Broadband}
\acro{EMF}{electric and magnetic fields}
\acro{EU}{European union}

\acro{FC}[FC]{fusion center}
\acro{FCC}{Federal Communications Commission}
\acro{FEC}{forward error correction}
\acro{FER}{frame error rate}
\acro{FFT}{fast Fourier transform}
\acro{FH}{frequency-hopping}
\acro{FH-SS}{frequency-hopping spread-spectrum}
\acrodef{FS}{Frame synchronization}
\acro{FSK}{frequency shift keying}
\acro{FSO}{free space optical}
\acro{FSsmall}[FS]{frame synchronization}  
\acro{FDMA}{Frequency Division Multiple Access}

\acro{GA}{Gaussian approximation}
\acro{Gbps}{giga bits per second}
\acro{GF}{Galois field }
\acro{GG}{Generalized-Gaussian}
\acro{GLIM}{generalized LED index modulation }
\acro{GIC}[GIC]{generalized information criterion}
\acro{GLRT}{generalized likelihood ratio test}
\acro{GMI}{generalized mutual information}
\acro{GPS}{Global Positioning System}
\acro{GMSK}{Gaussian minimum shift keying}
\acro{GS}{geometric shaping}
\acro{GSMA}{Global System for Mobile communications Association}

\acro{HAP}{high altitude platform}

\acro{IDR}{information distortion-rate}
\acro{IFFT}{inverse fast Fourier transform}
\acro{iht}[IHT]{iterative hard thresholding}
\acro{i.i.d.}{independent, identically distributed}
\acro{IM/DD}{intensity modulation and direct detection}
\acro{IoT}{Internet of Things}                      
\acro{IR}{impulse radio}
\acro{lric}[LRIC]{lower restricted isometry constant}
\acro{lrict}[LRICt]{lower restricted isometry constant threshold}
\acro{ISI}{intersymbol interference}
\acro{ITU}{International Telecommunication Union}
\acro{ICNIRP}{International Commission on Non-Ionizing Radiation Protection}
\acro{IEEE}{Institute of Electrical and Electronics Engineers}
\acro{ICES}{IEEE international committee on electromagnetic safety}
\acro{ICI}{inter-channel interference}
\acro{IEC}{International Electrotechnical Commission}
\acro{IARC}{International Agency on Research on Cancer}
\acro{IS-95}{Interim Standard 95}

\acro{KKT}{Karush-Kuhn-Tucker}

\acro{LDPC} {low-density parity-check}
\acro{LED} {light emitting diode}
\acro{LEO}{low earth orbit}
\acro{LF}{likelihood function}
\acro{LLF}{log-likelihood function}
\acro{LLR}{log-likelihood ratio}
\acro{LLRT}{log-likelihood ratio test}
\acro{LOS}{Line-of-Sight}
\acro{LRT}{likelihood ratio test}
\acro{wlric}[LWRIC]{lower weak restricted isometry constant}
\acro{wlrict}[LWRICt]{LWRIC threshold}
\acro{LPWAN}{low power wide area network}
\acro{LoRaWAN}{Low power long Range Wide Area Network}
\acro{NLOS}{non-line-of-sight}

\acro{MAC}{multiple access channel}
\acro{MAP}{maximum a posteriori probability}
\acro{MB}{multiband}
\acro{MC}{multicarrier}
\acro{MDS}{mixed distributed source}
\acro{MF}{matched filter}
\acro{m.g.f.}{moment generating function}
\acro{MI}{mutual information}
\acro{MIMO}{multiple-input multiple-output}
\acro{MISO}{multiple-input single-output}
\acrodef{maxs}[MJSO]{maximum joint support cardinality}                       
\acro{ML}[ML]{maximum likelihood}
\acro{MMSE}{minimum mean-square error}
\acro{MMV}{multiple measurement vectors}
\acrodef{MOS}{model order selection}
\acro{M-PSK}[${M}$-PSK]{$M$-ary phase shift keying}                       
\acro{M-APSK}[${M}$-PSK]{$M$-ary asymmetric PSK} 

\acro{M-QAM}[$M$-QAM]{$M$-ary quadrature amplitude modulation}
\acro{MRC}{maximal ratio combiner}                  
\acro{maxs}[MSO]{maximum sparsity order}                                      
\acro{M2M}{machine to machine}                                                
\acro{MUI}{multi-user interference}
\acro{mMTC}{massive Machine Type Communications}      
\acro{mm-Wave}{millimeter-wave}
\acro{MP}{mobile phone}
\acro{MPE}{maximum permissible exposure}

\acro{NB}{narrowband}
\acro{NBI}{narrowband interference}
\acro{NHS}{National Health Service}
\acro{NLA}{nonlinear sparse approximation}
\acro{NLOS}{Non-Line of Sight}
\acro{NOMA}{nonorthogonal multiple access}
\acro{NTIA}{National Telecommunications and Information Administration}
\acro{NTP}{National Toxicology Program}

\acro{OC}{optimum combining}                             
\acro{OC}{optimum combining}
\acro{ODE}{operational distortion-energy}
\acro{ODR}{operational distortion-rate}
\acro{OFDM}{orthogonal frequency-division multiplexing}
\acro{OMAC}{optical multiple access channel}
\acro{OMA}{orthogonal multiple access}
\acro{omp}[OMP]{orthogonal matching pursuit}
\acro{OSMP}[OSMP]{orthogonal subspace matching pursuit}
\acro{OQAM}{offset quadrature amplitude modulation}
\acro{OQPSK}{offset QPSK}
\acro{OFDMA}{Orthogonal Frequency-division Multiple Access}
\acro{OPEX}{Operating Expenditures}
\acro{OSINR}[\textrm{OSINR}]{optical signal-to-interference plus noise ratio}
\acro{OSNR}[\textrm{OSNR}]{optical signal-to-noise ratio} 
\acro{OQPSK/PM}{OQPSK with phase modulation}
\acro{OWC}{Optical wireless communications}

\acro{PAM}{pulse amplitude modulation}
\acro{PAR}{peak-to-average ratio}
\acro{PAPR}{peak-to-average power ratio }
\acro{PAS}{probabilistic amplitude shaping}
\acro{PCM}{pairwise coded modulation}
\acrodef{pdf}[PDF]{probability density function}                      
\acro{PDF}{probability density function}
\acrodef{p.d.f.}[PDF]{probability distribution function}
\acro{PDP}{power dispersion profile}
\acro{PD-NOMA}{power domain NOMA}
\acro{PMF}{probability mass function}                             
\acrodef{p.m.f.}[PMF]{probability mass function}
\acro{PN}{pseudo-noise}
\acro{PPM}{pulse position modulation}
\acro{PRake}{Partial Rake}
\acro{PS}{probabilistic shaping}
\acro{PSD}{power spectral density}
\acro{PSEP}{pairwise synchronization error probability}
\acro{PSK}{phase shift keying}
\acro{PD}{photo-detector}
\acro{P2P}{point-to-point}
\acro{8-PSK}[$8$-PSK]{$8$-phase shift keying}

\acro{QAM}{Quadrature Amplitude Modulation}
\acro{QLED}{quadrichromatic LED}
\acro{QPSK}{quadrature phase shift keying}
\acro{OQPSK/PM}{OQPSK with phase modulator}

\acro{RD}[RD]{raw data}
\acro{RDL}{"random data limit"}
\acro{RF}{radio frequency}
\acro{ric}[RIC]{restricted isometry constant}
\acro{rict}[RICt]{restricted isometry constant threshold}
\acro{rip}[RIP]{restricted isometry property}
\acro{RGB}{red/green/blue}
\acro{RGBA}{red/green/blue/amber}
\acro{ROC}{receiver operating characteristic}
\acro{rq}[RQ]{Raleigh quotient}
\acro{RS}[RS]{Reed-Solomon}
\acro{RSC}[RSSC]{RS based source coding}
\acro{r.v.}{random variable}                               
\acro{R.V.}{random vector}
\acro{RMS}{root mean square}
\acro{RFR}{radiofrequency radiation}
\acro{RIS}{reconfigurable intelligent surface}
\acro{RNA}{RiboNucleic Acid}

\acro{SA}[SA-Music]{subspace-augmented MUSIC with OSMP}
\acro{SCBSES}[SCBSES]{Source Compression Based Syndrome Encoding Scheme}
\acro{SCM}{sample covariance matrix}
\acro{SDT}{sparse-dense transmission}
\acro{SE}{spectral efficiency}
\acro{SEP}{symbol error probability}
\acro{SG}[SG]{sparse-land Gaussian model}
\acro{SIC}{successive interference cancellation}
\acro{SIMO}{single-input multiple-output}
\acro{SINR}{signal-to-interference plus noise ratio}
\acro{SIR}{signal-to-interference ratio}
\acro{SISO}{single-input single-output}
\acro{SMD}{symbol metric decoder}
\acro{SMV}{single measurement vector}
\acro{SMT}{space modulation technique}
\acro{SM}{spatial modulation}
\acro{SNR}[\textrm{SNR}]{signal-to-noise ratio} 
\acro{sp}[SP]{subspace pursuit}
\acro{SPD}{spectral power distribution}
\acro{SS}{spread spectrum}
\acro{SW}{sync word}
\acro{SAR}{specific absorption rate}
\acro{SSB}{synchronization signal block}
\acro{sss}[SpaSoSEnc]{sparse source syndrome encoding}

\acro{TH}{time-hopping}
\acro{TLED}{TriLED}
\acro{ToA}{time-of-arrival}
\acro{TR}{transmission rate}
\acro{TW}{Tracy-Widom}
\acro{TWDT}{TW Distribution Tail}
\acro{TCM}{trellis coded modulation}
\acro{TDD}{time-division duplexing}
\acro{TDMA}{Time Division Multiple Access}

\acro{UAV}{unmanned aerial vehicle}
\acro{UE}{user equipment}
\acro{UL}{uplink}
\acro{uOFDM}{unipolar orthogonal frequency-division multiplexing}
\acro{uric}[URIC]{upper restricted isometry constant}
\acro{urict}[URICt]{upper restricted isometry constant threshold}
\acro{URLLC}{Ultra Reliable Low Latency Communications}
\acro{UWB}{ultrawide band}
\acro{UWBcap}[UWB]{Ultrawide band}

\acro{WDM}{wavelength division multiplexing}
\acro{WHO}{World Health Organization}
\acro{WiM}[WiM]{weigh-in-motion}
\acro{Wi-Fi}{wireless fidelity}
\acro{WLAN}{wireless local area network}
\acro{wm}[WM]{Wishart matrix}                               
\acroplural{wm}[WM]{Wishart matrices}
\acro{WMAN}{wireless metropolitan area network}
\acro{WPAN}{wireless personal area network}
\acro{wric}[WRIC]{weak restricted isometry constant}
\acro{wrict}[WRICt]{weak restricted isometry constant thresholds}
\acro{wrip}[WRIP]{weak restricted isometry property}
\acro{WSN}{wireless sensor network}                        
\acro{WSS}{wide-sense stationary}
\acro{wuric}[UWRIC]{upper weak restricted isometry constant}
\acro{wurict}[UWRICt]{UWRIC threshold}

\acro{VLC}{visible light communication}
\acro{VL}{visible light}
\acro{VPN}{virtual private network} 
\acro{FSO}{free space optics}
\acro{IoST}{Internet of space things}

\acro{GSM}{Global System for Mobile Communications}
\acro{2G}{second-generation cellular network}
\acro{3G}{third-generation cellular network}
\acro{4G}{fourth-generation cellular network}
\acro{5G}{5th-generation cellular network}	
\acro{6G}{6th-generation}	
\acro{gNB}{next generation node B base station}
\acro{NR}{New Radio}
\acro{UMTS}{Universal Mobile Telecommunications Service}
\acro{LTE}{Long Term Evolution}

\acro{QoS}{Quality of Service}
\end{acronym}
	\pgfplotsset{every axis/.append style={
			line width=1pt,
			legend style={font=\large, at={(0.97,0.85)}}},
	} %
	\pgfplotsset{compat=1.18, legend style={fill=white, fill opacity=0.85, draw opacity=0.9, text opacity=1}}
\title{Probabilistic Constellation Shaping for Enhancing Spectral Efficiency in NOMA VLC Systems}
\author{Amanat Kafizov,~\IEEEmembership{Student Member,~IEEE}, Ahmed Elzanaty,~\IEEEmembership{Senior Member,~IEEE}, Mohamed-Slim Alouini,~\IEEEmembership{Fellow,~IEEE}
\thanks{A. Kafizov and  M.-S. Alouini are with Computer, Electrical, and Mathematical Science and Engineering (CEMSE) Division, King Abdullah University of Science and Technology (KAUST), Thuwal, Saudi Arabia (email: \{amanat.kafizov, slim.alouini\}@kaust.edu.sa).}
\thanks{A. Elzanaty is with the Institute for Communication Systems (ICS), Home of
the 5G and 6G Innovation Centres (5GIC and 6GIC), University of Surrey,
Guildford GU2 7XH, United Kingdom (e-mail: a.elzanaty@surrey.ac.uk).}
}
\maketitle
\begin{abstract}
The limited modulation bandwidth of the \acp{LED} presents a challenge in the development of practical high-data-rate \ac{VLC} systems. In this paper, a novel adaptive coded \ac{PS}-based \ac{NOMA} scheme is proposed to improve \ac{SE} of \ac{VLC} systems in multiuser uplink communication scenarios. The proposed scheme adapts its rate to the \ac{OSNR} by utilizing non-uniformly distributed discrete constellation symbols and low complexity channel encoder. Furthermore, an alternate optimization algorithm is proposed to determine the optimal channel coding rate, constellation spacing, and \ac{PMF} of each user. The extensive numerical results show that the proposed \ac{PS}-based \ac{NOMA} scheme closely approaches the capacity of \ac{NOMA} with fine granularity. Presented results demonstrate the effectiveness of our scheme in improving the \ac{SE} of \ac{VLC} systems in multiuser scenarios. For instance, our scheme exhibits substantial \ac{SE} gains over existing schemes, namely, the \ac{PCM}, \ac{GS}, and uniform-distribution schemes. These findings highlight the potential of our approach to significantly enhance \ac{VLC} systems.

\end{abstract}
\begin{IEEEkeywords} 
		NOMA; probabilistic shaping; VLC; performance analysis; spectral efficiency 
\end{IEEEkeywords}
	\acresetall 
\section{Introduction}\label{sec:Inro}
The exponential growth in the number of \ac{IoT} devices has led to a significant demand for high data rates in wireless traffic. However, traditional \ac{RF} communication systems face challenges in meeting this demand due to limited spectrum resources. As a promising high-speed alternative and complementary technique to \ac{RF}, \ac{VLC}, which utilizes \ac{LED} devices for both illumination and data transmission, has attracted considerable research attention \cite{khalighi2014survey}. Nonetheless, the development of practical high-data-rate \ac{VLC} systems is hindered by the narrow modulation bandwidth of \acp{LED} \cite{marshoud2018optical}. As a result, numerous techniques, encompassing advanced multiple access schemes, \ac{MIMO}, \ac{RIS}, and frequency reuse methods, have been explored to enhance \ac{SE} of \ac{VLC} systems. Nevertheless, a shared limitation in these methodologies is their reliance on a uniform symbol distribution, resulting in a gap loss from the channel capacity \cite{mathur2021survey,forney1984efficient,wang2023transmit,abdelhady2020visible}.

 Recently, constellation shaping techniques have emerged as a method to improve the \ac{SE} of communication systems, mitigating the aforementioned gap loss. In general, to achieve the channel capacity with a given transmitted power constraint, the input to the channel should have a capacity-achieving distribution. However, the actual input distribution often deviates from the capacity-achieving one, resulting in a shaping gap between the \ac{AR} and the channel capacity \cite{kschischang1993optimal}. \Ac{PS} and \ac{GS} are widely known techniques that aim to reduce this shaping gap by optimizing the probability and location of the constellation symbols, respectively. Compared to \ac{GS}, \ac{PS} offers greater flexibility for granularity and enables easy implementation of Gray mapping \cite{cho2019probabilistic,javed2021probabilistic}. Importantly, it's worth noting that constellation shaping doesn't conflict with existing techniques used to enhance the \ac{SE} of \ac{VLC} systems. Instead, it can be combined with these techniques to further enhance the \ac{SE} of \ac{VLC} systems.

Authors in \cite{bocherer2015bandwidth} propose a practical \ac{PS} technique based on reverse concatenation architecture, where distribution matcher is applied before a systematic channel encoder at the transmitter. In \cite{bocherer2015bandwidth}, uniform information bits are practicably transformed into symbols with desired distribution by \ac{CCDM}. The symbols with low energy are transmitted more frequently than the high energy symbols. Thus, the available bandwidth is efficiently utilized. The advantages of this architecture have led to its widespread adoption in optical communications. For example, the work in \cite{elzanaty2020adaptive} proposes an adaptive modulation scheme based on \ac{PS} for \ac{FSO} channels, providing a solution for \ac{FSO} backhauling in
terrestrial and satellite communication systems to achieve higher \ac{SE}. In \cite{kafizov2022probabilistic}, authors proposes adaptive \ac{SM} scheme with \ac{PS}, where either the spatial or constellation symbols are probabilistically shaped,  to improve the \ac{TR} of \ac{VLC} systems. However, existing works primarily focus on \ac{P2P} communication scenarios.


In real-world \ac{VLC} systems, where a single \ac{LED} transmitter is typically expected to accommodate multiple users, addressing the challenge of serving multiple users has led researchers to propose advanced schemes for \ac{MAC}. While \ac{OMA} techniques have been designed for \ac{VLC} systems, researchers are exploring \ac{NOMA} techniques to efficiently increase the \ac{SE} of \ac{VLC} systems \cite{feng2019joint}. Among the various \ac{NOMA} techniques, power domain \ac{NOMA} stands out as a widely adopted approach, wherein multiple users are served over the available time and frequency resources using different power levels \cite{sadat2022survey}. \Ac{NOMA} offers several benefits such as low transmission latency, improved \ac{SE} and higher
cell-edge throughput. The \ac{NOMA} symbol is generated by superposing signals from multiple users, while \ac{SIC} is employed at the receiver for signal detection \cite{yin2016performance}. 

In order to examine the interaction between \ac{NOMA} and \ac{VLC}, researchers in \cite{yin2016performance} presented a mathematical framework to assess the average performance of \ac{NOMA} \ac{VLC} systems. The analytical findings demonstrated that \ac{NOMA} \ac{VLC} surpasses \ac{OMA} \ac{VLC} in terms of the coverage probability, particularly in high \ac{OSNR} conditions. The study also revealed that by increasing the discrepancy in channel gains between paired users and selecting appropriate \acp{LED}, the benefits of \ac{NOMA} can be further amplified. In \cite{feng2019joint}, the authors aim to maximize the sum rate by optimizing the power allocation for each user in a power-line-fed \ac{VLC} network. The efficient channel estimation method for minimizing the \ac{BER} in the optical \ac{MIMO} \ac{NOMA}-\ac{VLC} is proposed in \cite{lin2017experimental}. In \cite{xiao2019hybrid} and \cite{obeed2020user}, the \ac{NOMA}-based hybrid \ac{VLC}/\ac{RF} systems are presented with design objectives as outage probability and the sum rate, respectively. However, these works consider uniform  signaling, which leads to the shaping gap. 

In contrast, authors in \cite{xu2020investigation} examine the application of \ac{PS} in \ac{NOMA} \ac{VLC}. However, this study encounters significant issues. Firstly, the adopted analysis methods, specifically the choice of \ac{AR}, are unsuitable for the \ac{VLC} domain. Secondly, the sum rate constraint of the capacity region of \ac{OMAC} is disregarded. Thirdly, essential properties such as limited average and/or peak optical power and the non-negativity of optical intensity are violated. Consequently, there is a lack of extensive research in analyzing the performance of \ac{PS} in \ac{NOMA} \ac{VLC} systems, which necessitates further exploration.

\begin{table}[b]
  \centering
  \caption{Comparison between proposed and typical works.}
  \begin{tabular}{lcccc}
    \toprule
    \textbf{Work} & \textbf{Constellation} & \textbf{Frequency} & \textbf{Channel} &  \textbf{Optimization}\\
    \textbf{} & \textbf{Shaping} & \textbf{Band Type} & \textbf{Type} &  \textbf{Objective} \\
    \midrule
    Proposed & Yes & \ac{VLC} & \ac{MAC} & Transmission \\
     &  &  &  &  rate\\
     \hline
    \cite{bocherer2015bandwidth} & Yes & \Ac{RF} & \ac{P2P} & Transmission\\
     &  &  &  &  rate\\
     \hline
    \cite{elzanaty2020adaptive} & Yes & \Ac{FSO} &  \ac{P2P} & Achievable\\
      &  &  &  &  rate\\
     \hline
    \cite{kafizov2022probabilistic} & Yes & \Ac{VLC} & \ac{P2P} & Transmission \\
      &  &  &  &  rate\\
     \hline
     \cite{feng2019joint} & No & \Ac{VLC} & \ac{MAC} & Sum rate\\
     \hline
     \cite{yin2016performance} & No & \Ac{VLC} & \ac{MAC} &  Coverage\\
      &  &  &  &  probability\\
     \hline
     \cite{lin2017experimental} & No & \Ac{VLC} & \ac{MAC} & \Ac{BER}\\
      \hline
     \cite{xiao2019hybrid} & No & Hybrid & \ac{MAC} & Outage \\
      &  & \Ac{VLC}/\ac{RF} &  & Probability\\
     \hline
     \cite{obeed2020user} & No & Hybrid & \ac{MAC} & Sum rate\\
     &  & \Ac{VLC}/\ac{RF} &  & \\
      \hline
     \cite{xu2020investigation} & No & Unsuitable & \ac{MAC} & Entropy\\
       &  & for \ac{VLC} &  & \\
    \bottomrule
  \end{tabular}
  \label{table:ch4:ComparisonTypicalWorks}
\end{table}

In \cref{table:ch4:ComparisonTypicalWorks}, we provide a succinct overview of the key attributes of the discussed works and our proposed scheme for better clarity and comparison. \cref{table:ch4:ComparisonTypicalWorks} summarizes key characteristics of these works, focusing on aspects such as constellation shaping, frequency band type, channel type and optimization objective. As highlighted in the table, traditional \ac{NOMA} \ac{VLC} systems typically adopt uniform input signals, where constellation shaping is an area yet to be fully explored to mitigate shaping loss for each user. Unlike existing schemes, our proposed adaptive coded \ac{PS}-based \ac{NOMA} scheme effectively adapts its rate to the \ac{OSNR} by utilizing non-uniformly distributed discrete constellation symbols. This adaptive strategy, coupled with a low-complexity channel encoder and an alternate optimization algorithm, leads to substantial gains in \ac{SE} for \ac{VLC} systems in multiuser uplink scenarios. The extensive numerical results showcase the scheme's ability to closely approach the capacity of \ac{NOMA}, outperforming existing schemes such as \ac{PCM}, \ac{GS}, and uniform-distribution schemes.

\begin{figure}[t]
	\centering	\includegraphics[width=0.7\linewidth]{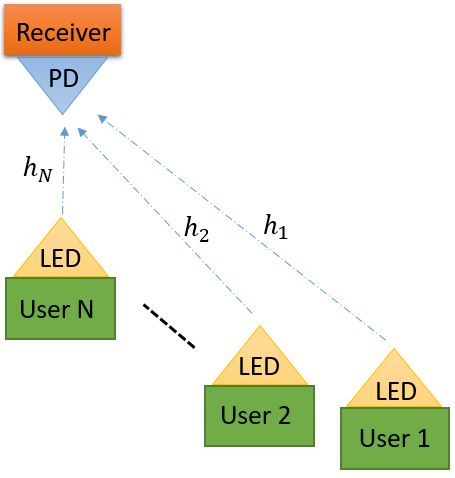}
	\caption{Uplink \ac{NOMA} transmission scenario for $\Nled$ users.}
	\label{fig:ch4:NOMAscenario}
\end{figure}
\begin{figure*}
	\centering
	\includegraphics[width=0.992\linewidth]{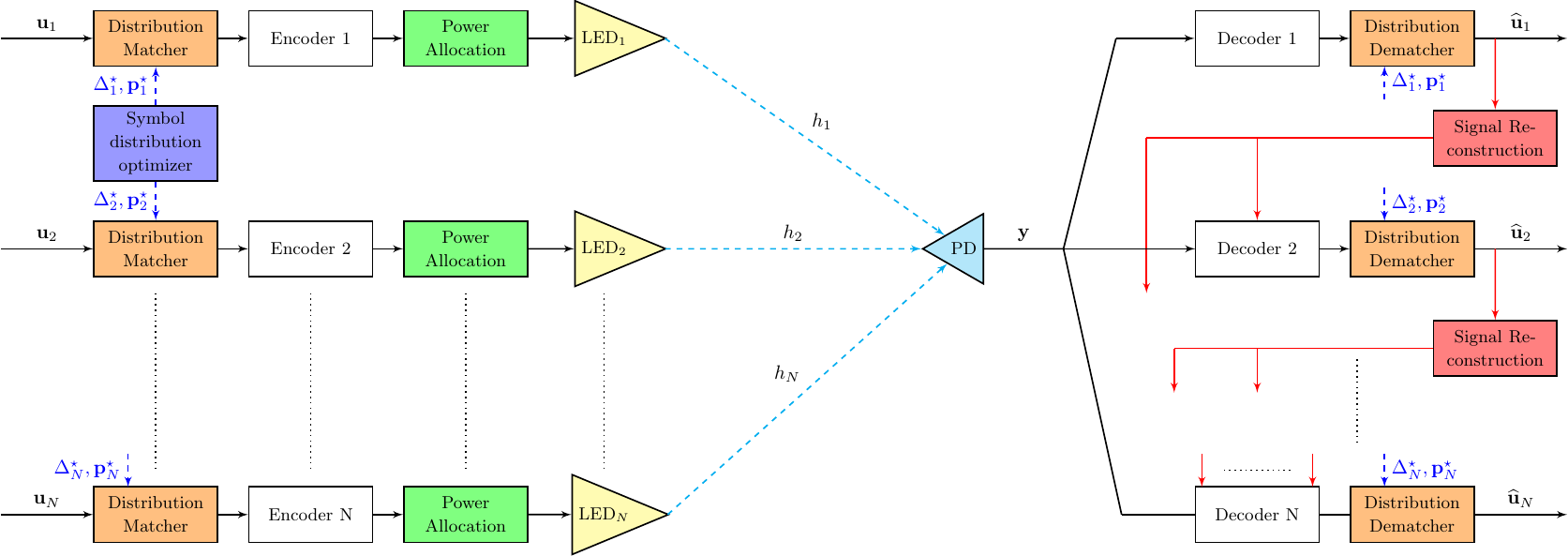}
	\caption{Architecture of \ac{NOMA} with \ac{PS} in \ac{VLC} systems.}
	\label{fig:ch4:NOMAarchitectue}
\end{figure*}
\subsection{Main Contributions}
In this paper, we propose a novel adaptive coded \ac{PS}-based \ac{NOMA} scheme designed to further improve the \ac{SE} of \ac{VLC} systems in multiuser uplink scenarios.  At the encoder of the $\scheme$ scheme, the uniform information bits are practicably transformed into unipolar $M$-PAM symbols with desired distribution by \ac{CCDM} \cite{schulte2015constant}, laying the foundation for adaptability in our proposed scheme. An efficient \ac{FEC} encoder is then applied to generate uniformly distributed parity bits. The uniformly distributed unipolar $M$-\ac{PAM} symbols are produced from the parity bits using the binary demapper. After that, multiplexer appends probabilistically shaped and uniformly distributed symbols to form a codeword. At the decoder, distribution dematching is performed after \ac{FEC} decoding. There is an optimization problem for each user that needs to be solved to determine the capacity-achieving distribution. By probabilistically shaping the distribution of the unipolar $M$-PAM symbols and optimizing their constellation spacing, the \ac{TR} of our proposed scheme approaches the capacity of \ac{NOMA} with fine granularity. The main contributions are summarized as follows:
\begin{itemize}
    \item We propose a novel adaptive coded \ac{PS}-based \ac{NOMA} \ac{VLC} scheme, where the spacing between unipolar $M$-PAM symbols and their \ac{PMF} are optimized for each user depending on the \ac{OSNR}.
    \item In contrast to \ac{P2P} communication, our work incorporates full consideration of multiuser interference to establish the theoretical foundation for derivation of the capacity of uplink \ac{NOMA} \ac{VLC}.
     \item We derive both the \ac{AR} and \ac{TR}
of the proposed scheme, providing deeper insights into
its performance characteristics. 
    \item A multi-variable optimization problem, formulated with the \ac{TR} objective function, is designed for each user to obtain the optimal distribution of constellation symbols. These optimization problems need to be solved sequentially, following an order that is inverse to the order of \ac{SIC}. 
    \item Solving optimization problem in original form is intrinsically complex due to the non-convex nature of the original problem. To simplify the problem, we propose an iterative approach where one variable is optimized while the other is fixed alternately.
    
    \item The optimization subproblem of alternating iterative approach is still non-convex. To address the non-convexity of optimization subproblem, we introduce a surrogate function, as proposed in \cite{lange2000optimization} and \cite{vontobel2008generalization}, to convexify the subproblem. 
    
    \item For convexified optimization subproblem, we use \ac{KKT} conditions to find the optimal input distribution that approaches the capacity of \ac{NOMA}.
    \item We provide an algorithm to compute the capacity-achieving distribution for the proposed scheme.
    \item The transmission and channel coding rates are adapted to \acp{OSNR}. 
    \item We assess the performance of the proposed scheme in terms of \ac{SE} and \ac{FER} under various conditions of \ac{OSNR}. Additionally, we compare our results with the lower and upper capacity bounds from \cite{chaaban2017capacity}, as well as with those obtained using \ac{PCM}, \ac{GS}, and uniform-distribution schemes. 
\end{itemize}

\subsection{Paper Organization and Notations}
The rest of the paper is organized as follows. \cref{sec:ch4:SignalModel} describes the \ac{NOMA} \ac{VLC} system model. The proposed scheme is presented in \cref{sec:ch4:ProposedScheme}. In \cref{sec:ch4:CapAnalysis}, we provide the rate analysis for each user. \cref{sec:ch4:RateAdaptation} describes optimization problem and proposed algorithm. The performance in terms of the \ac{SE} and \ac{FER} are shown in \cref{sec:ch4:PerfAnalysis}. Finally, \cref{sec:ch4:Conclusion} concludes our work.

This paper uses the following notations. Row vectors and matrices are represented by the lowercase and uppercase bold letters, respectively. The function $(\cdot)^{T}$ denotes the transpose. The notation $\mathbb{P}\{\cdot\}$ is the probability of an event, while $\mathbb{E}\left(\cdot\right)$ is the expectation of a \ac{r.v.}. The logarithm function denoted as $\log(\cdot)$ has base two and acts element-wise on vectors. The optimal solution is denoted by $\left(\cdot\right)^\star$.

\section{Systems Model}\label{sec:ch4:SignalModel}
\begin{figure*}
	\centering
	\includegraphics[width=0.99\textwidth]{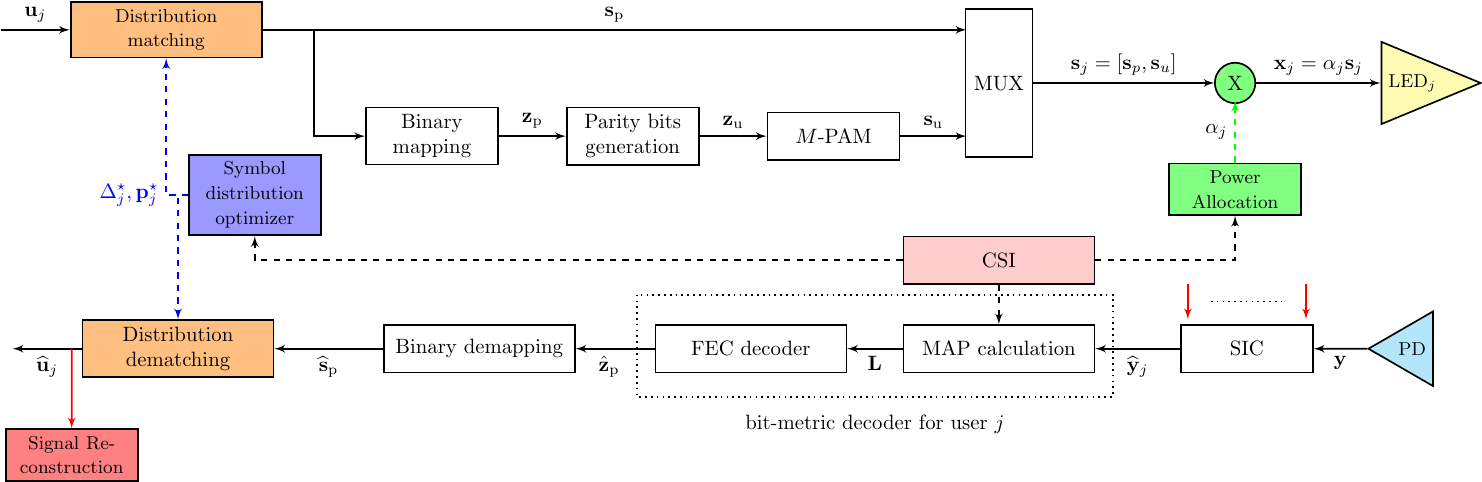}
	\caption{Encoder and decoder for user $j$ of the proposed \ac{NOMA} \ac{VLC} scheme with \ac{PS}.}
	 \label{fig:ch4:scheme}
\end{figure*}

This section presents the system model for the uplink \ac{NOMA} transmission scenario. \cref{fig:ch4:NOMAscenario} illustrates the depicted scenario, where $\Nled$ users transmit data to a common \ac{PD} at the receiver side. For data transmission, each user is equipped with a single \ac{LED}. The users are ordered based on their optical channel qualities such that 
\begin{align} \label{assumption:ChnOrder}
    h_{1}\leq h_{2}\leq\cdots\leq h_{\Nled} \mbox{,}
\end{align}
where $h_{j}$ represents the real-valued gain of the \ac{VLC} channel between the $j$-th \ac{LED} and the \ac{PD} \cite{yin2015performance} \footnote{While considering a \ac{LOS} path gain akin to \cite{fath2012performance}, our proposed system is versatile and can encompass broader channel characteristics, including \ac{NLOS} links as in \cite{mesleh2011optical}.}
\begin{equation}
h_{j}\!\!=\!\!\left\{\begin{array}{ll}
\!\!\!\!\frac{(m+1) A}{2 \pi d_{j}^{2}}\!\cos^{m}\!\left(\!\phi_{j}\!\right)\!T\!\left(\!\psi_{j}\!\right)\!G\!\left(\!\psi_{j}\!\right)\!\cos\!\left(\!\psi_{j}\!\right)\!, & \!\!0\!\leq\!\psi_{j}\! \leq \!\Psi_{c} \\
0, & \psi_{j}>\Psi_{c}
\end{array}\right.\!\!\!\!\mbox{,}
\end{equation}
where $d_{j}$ denotes the distance from the \ac{PD} to $j$-th \ac{LED}; $A$ and $\Psi_{c}$ are the area and field-of-view of \ac{PD}, respectively; $\psi_{j}$ and $\phi_{j}$ are the angles of arrival and departure from $j$-th \ac{LED} to the \ac{PD} with respect to their normal axes, respectively; $T\!\left(\psi_{j}\right)$ is the gain of the optical filter; $G\!\left(\psi_{j}\right)$ denotes the gain of the optical concentrator \cite{yin2015performance}; $\Phi_{1/2}$ is the half-power angle of the \ac{LED}; and the Lambert's mode number is 
\begin{equation}
    m=\frac{-\ln{(2)}}{\ln{\left(\cos{\left(\Phi_{1/2}\right)}\right)}} \mbox{.}
\end{equation}

In the \ac{NOMA} scheme, the received signal $y$ at the \ac{PD} is formed by superposition of all signals transmitted from the \acp{LED} in the power domain, i.e.,
\begin{align} \label{eq:ch4:SignalModelPD}
y=\sum\limits_{j=1}^{\Nled}h_{j}x_{j}+w=\sum\limits_{j=1}^{\Nled}h_{j}\alpha_{j}s_{j}+w \mbox{,}
\end{align}
 where $x_{j}\triangleq \alpha_{j} s_{j}$ is the transmitted optical intensity symbol from user $j$; $s_{j} \in \mathcal{S}_{j}\triangleq\{\Delta_{j},2\Delta_{j},\cdots,M\Delta_{j}\}$ is the unipolar $M$-ary \ac{PAM} constellation symbol of user $j$; $\Delta_{j}$ is the spacing between adjacent symbols of user $j$; $w$ is the sum of thermal and high intensity ambient light shot noises, which can be modeled as an \ac{AWGN} with mean zero and variance $\sigma^{2}$ \cite{mesleh2011optical}; and $\alpha_{j}\triangleq1/h_{j}$ is
the power allocation weight for the $j$-th user \cite{yin2015performance}.

Building upon the signal model, we now introduce the constraints that govern the power allocation in the \ac{NOMA} scheme. To begin, we introduce the notations $S_{j}$ and $X_{j}$ as \acp{r.v.} representing the constellation and transmitted symbols of user $j$, respectively. The $i$-th constellation symbol of user $j$ is denoted as $s_{ji}=\Delta_{j}\;\!\!i$, where $i\in \mathbb{M}\triangleq\{1,2,\cdots,M\}$, and $x_{ji}\triangleq \alpha_{j} s_{ji}$. Furthermore, $p^{i}_{j}$ is defined as the probability of transmitting the $i$-th constellation symbol from user $j$, i.e.,
\begin{equation}
p^{i}_{j} \triangleq \mathbb{P}\{S_{j}=s_{ji}\}=\mathbb{P}\{X_{j}=x_{ji}\}\mbox{.}
\end{equation}
Therefore, the average optical transmitted power constraint at user $j$ is
\begin{align}
    \gamma_{j}\triangleq\alpha_{j}\sum\limits_{i=1}^{M}s_{ji}p^{i}_{j}=\alpha_{j}\;\mathbf{s}^{T}_{j}\;\mathbf{p}_{j}=\alpha_{j}\;P_{rj}\leq P_{\max}\mbox{,}
    \label{eq:ch4:PowerConstraint}
\end{align}
where $\mathbf{p}_{j}\triangleq\left[p^{1}_{j},p^{2}_{j},\cdots,p^{M}_{j}\right]^{T}$ is the \ac{PMF} of $S_{j}$; $\mathbf{s}_{j}\triangleq\left[s_{j1},s_{j2}\cdots,s_{jM}\right]^{T}$; $P_{rj}$ is a target average optical received power from user $j$; and $P_{\max}$ is the maximum average optical transmit power.

From \eqref{eq:ch4:PowerConstraint}, the constellation spacing that satisfies the power constraint is
\begin{align}\label{constraint:ch4:Delta}
     \frac{P_{rj}}{M}\leq\Delta_{j}\!\leq\!P_{rj} \mbox{.}
\end{align}
 This work utilizes the instantaneous \ac{OSNR} which is expressed as $P_{rj}/\sigma$ instead of the electrical \ac{SNR}, as \ac{OSNR} is more pertinent for \ac{VLC} systems \cite{farid2009channel}.


In the \ac{NOMA} system architecture depicted in \cref{fig:ch4:NOMAarchitectue}, the receiver utilizes \ac{SIC} detection. The decoding order at the receiver is determined based on the average received signal strength, which is influenced by the channel gains. An efficient uplink power control design ensures that the average optical received powers of different users are well separated by allocating larger signal powers to users with lower channel gains \cite{zhang2016uplink}. The average optical received power from user $j$ can be expressed as 
\begin{align}
    P_{rj}=P_{rN}\; 10^{\frac{\left(N-j\right)\Pbcf}{10}}, \quad \forall j \!\in\!\mathbb{N}\triangleq\{1, 2,\cdots, N\}\mbox{,} 
\end{align}
where
\begin{align}\label{eq:ch4:PowerbackOff}
     \Pbcf=\frac{10\log\left(\frac{P_{\max}}{\alpha_{1}P_{rN}}\right)}{N-1}
 \end{align}
 is a power back-off coefficient that guarantees the satisfaction of the average optical transmitted power constraint for each user. Additionally, it allows the average optical received power from user $j$ signal to be $\Pbcf-$dB weaker than the power from user $(j-1)$ signal, enabling successive mitigation of co-channel interferences. 
 
 In \ac{NOMA} with \ac{SIC}, the decoding process for user $j$ involves demodulating and decoding signals from users $1$ to $j-1$, re-encoding and re-modulating the recovered information bits at the receiver from these users, and subtracting them from the superimposed received signal at user $j$. The remaining signals from users $j+1$ to $N$ are considered as inter-user interference. To ensure perfect interference cancellation with \ac{SIC}, the probability of error in detection should be exceptionally low. This can be achieved by operating within an appropriate \ac{AR} of the \ac{OMAC} channel and deploying robust channel coding with an extended code length. Consequently, the received signal at the user $j$ decoder, post \ac{SIC} processing, can be accurately approximated as

\begin{align}\label{eq:ch4:RxSignalj}
    \widehat{y}_{j}\triangleq y-\sum\limits_{k=1}^{j-1}h_{k}x_{k}=h_{j}x_{j}+\sum\limits_{k=j+1}^{\Nled}h_{k}x_{k}+w \mbox{.}
\end{align}

\section{Proposed NOMA VLC scheme with PS}\label{sec:ch4:ProposedScheme}
The proposed \ac{NOMA} \ac{VLC} scheme with \ac{PS} aims to enhance the \ac{SE} of \ac{NOMA} \ac{VLC} systems by probabilistically shaping the distribution of the input symbols and optimizing the constellation spacing. In this section, the proposed encoder and decoder are described.

\subsection{Encoder}\label{subsec:ch4:encoder}
The encoder for user $j$ of the $\scheme$ scheme is shown in \cref{fig:ch4:scheme}. The encoder transforms information bits into probabilistically shaped channel-coded symbols. In our proposed encoder, a reverse concatenation architecture is used, where the distribution matching is executed before the channel encoding, in order to mitigate the common problem of error bursts when one corrupted symbol at the dematcher input can lead to various corrupted bit-errors at the dematcher output. The proposed encoder consists of several components, including \ac{DM}, binary mapper, parity bits generator of the \ac{FEC} encoder, multiplexer, $M$-\ac{PAM} converter, symbol distribution optimizer, and \ac{LED}.

\subsubsection{Symbol Distribution Optimizer}\label{subsubsec:ch4:DistrOptimizer}
At first, the capacity-achieving distribution of the proposed scheme defined by parameters $\mathbf{p}^{\star}_{j}$ and $\Delta^{\star}_{j}$ is computed. This distribution allows us to achieve the highest reliable communication rate given a specific \ac{OSNR}. In \cref{sec:ch4:RateAdaptation}, we present an algorithm for obtaining the capacity-achieving distribution.

\subsubsection{Distribution Matcher}\label{subsubsec:ch4:DM}
 In the encoder, an invertible fixed-length  \ac{DM} is used, i.e.,  \ac{CCDM}. All possible output sequences of \ac{CCDM} have an identical target distribution $\mathbf{p}_{j}$. In general, the goal of \ac{CCDM} is to output symbols with distribution $\mathbf{p}_{j}$ from uniformly distributed stream of input bits. The rate of \ac{CCDM} is $\mathrm{R}_{\mathrm{ccdm}}\triangleq k/n_{\text{p}}$, where $k$ and $n_{\text{p}}$ are the number of input bits and output symbols, respectively. For an output sequence length that tends towards infinity, the \ac{CCDM} rate converges to the source entropy, i.e.,
\begin{align}
    \underset{n_{p} \to \infty}{\lim} \frac{k}{n_{p}}=\mathcal{H}\left(S_{\text{p}}\right) \mbox{,}
\end{align}
where $\mathcal{H}\left(S_{\text{p}}\right)=-\sum\limits_{i=1}^{M}p^{i}_{j} \log\left(p^{i}_{j} \right)$ is the entropy of the discrete \ac{r.v.} $S_{\text{p}}$, representing the shaped symbols at the output of \ac{CCDM}. In case of finite $n_{\text{p}}$, the \ac{CCDM} rate is less than the source entropy. However, the rate loss is negligible for a long finite output sequence length \cite{schulte2015constant}. For example, at $n_{\text{p}}=64800$, the rate loss is less than $7.5\times 10^{-4}$ bits/symbol for $8$-\ac{PAM}  \cite{Dvbs2x:15}. 
 
In our proposed encoder, the \ac{CCDM} transforms uniformly distributed stream of information bits, $\mathbf{u}_{j}\in\{0,1\}^{k}$, into probabilistically shaped unipolar $M$-\ac{PAM} symbols represented by $\mathbf{s}_{\text{p}}\in\{\Delta^{\star}_{j}, 2\Delta^{\star}_{j},\cdots, M\Delta^{\star}_{j}\}^{n_{\text{p}}}$ with the target distribution $\mathbf{p}^{\star}_{j}$, which is computed in \cref{sec:ch4:RateAdaptation}.

\subsubsection{Channel Encoder}
Binary \ac{FEC} encoder is being used to achieve reliable communication with a data rate that is close to the channel capacity. The probabilistically shaped $M$-\ac{PAM} symbols, $\mathbf{s}_{\text{p}}\triangleq\left[{s}^{1}_{\text{p}},{s}^{2}_{\text{p}},\cdots,{s}^{n_{\text{p}}}_{\text{p}}\right]$, are converted into a vector of bits, $\mathbf{z}_{\text{p}}\in\{0,1\}^{\log\left(M\right)\;n_{\text{p}}}$, using the binary mapper such that 
\begin{align}
    \mathbf{z}_{\text{p}}\triangleq \left[\mathbb{B}\left({s}^{1}_{\text{p}}\right),\mathbb{B}\left({s}^{2}_{\text{p}}\right),\cdots,\mathbb{B}\left({s}^{n_{\text{p}}}_{\text{p}}\right) \right]\mbox{,}
\end{align}
where
\begin{align}
    \mathbb{B}\!\left({s}^{k}_{\text{p}}\right)\!\triangleq\!\left[b^{k}_{1},b^{k}_{2},\cdots\!,b^{k}_{\log\left(M\right)}\right]\!\!, \,\,\text{for } k\in \{1,2,\cdots,n_{ p}\},
\end{align}
and $b^{k}_{l}$ is the bit level $l$ of the symbol ${s}^{k}_{\text{p}}$. We use Gray code binary mapper, as it yields good performance \cite{bocherer2015bandwidth}. 

Then, the bits are coded with the systematic binary \ac{FEC} encoder with a channel coding rate $\mathrm{R}_{\text{FEC}}$ as 
\begin{align}
    \mathbf{z}^{T}_{\text{u}}=\Pmatrix\;\mathbf{z}^{T}_{\text{p}}\mbox{,}
\end{align}
where $\Pmatrix\in\{0,1\}^{\left(\frac{1}{\Rfec}-1\right) n_{\text{p}} \log\left(M\right)\times n_{\text{p}} \log\left(M\right)}$ is the parity matrix, which can be obtained by representing the code generator matrix in a standard form \cite{biglieri2005coding}. Please note that, regardless of the distribution of $\mathbf{z}_{\text{p}}$, the parity bits, $\mathbf{z}_{\text{u}}\in\{0,1\}^{\left(\frac{1}{\Rfec}-1\right) n_{\text{p}} \log\left(M\right)}$, are approximately uniformly distributed. The reason is that when a large number of bits are added, the distribution of the resulting bits tends to be uniform, even if the distribution of the original bits is not uniform \cite{bocherer2019probabilistic}. 

\subsubsection{Sparse-Dense Signal}
The uniformly distributed unipolar $M$-\ac{PAM} symbols, $\mathbf{s}_{\text{u}}\in\{\Delta^{\star}_{j}, 2\Delta^{\star}_{j},\cdots, M\Delta^{\star}_{j}\}^{\left(\frac{1}{\Rfec}-1\right) n_{\text{p}}}$, are produced from the parity bits, $\mathbf{z}_{\text{u}}$, using the binary demapper. After that, the multiplexer appends probabilistically shaped and uniformly distributed symbols to form a codeword consisting of $n=n_{\text{p}}/\Rfec$ symbols, such that  
\begin{align}\label{eq:ch4:codeword}
    \mathbf{s}_{j}=\left[\;\mathbf{s}_{\text{p}}\;,\; \mathbf{s}_{\text{u}}\;\right]\mbox{.}
\end{align}
Consequently, the codeword in \eqref{eq:ch4:codeword} is multiplied by the power allocation coefficient $\alpha_{j}$ to form the transmitted signal, i.e.,
\begin{align}\label{eq:ch4:signalTX}
    \mathbf{x}_{j}=\alpha_{j}\mathbf{s}_{j}\mbox{.}
\end{align}
The signal in \eqref{eq:ch4:signalTX} is a \ac{SDT} signal because uniformly distributed symbols have the maximum source entropy (representing dense information), while the probabilistically shaped symbols represent smaller entropy (representing sparse information).

\subsection{Decoder}
After discussing the proposed encoder and its output in the previous subsection, our attention now shifts to the decoder for user $j$ within the $\scheme$ scheme, as illustrated in \cref{fig:ch4:scheme}. The objective is to achieve the maximum \ac{AR} of the \ac{SDT} signal by employing the optimal \ac{SMD} for \ac{FEC} decoding. However, it is worth noting that \ac{SMD} entails high computational complexity \cite{bocherer2015bandwidth}. Therefore, to strike a balance between computational complexity and achieving a rate close to the \ac{SDT} capacity, we propose the utilization of \ac{BMD} with soft decision-making capabilities. This approach involves calculating the \ac{MAP} for each bit level of every received symbol. The selection of \ac{MAP} is motivated by its incorporation of prior knowledge, including the probability distribution of the transmitted symbols. In contrast, \ac{ML} decoding relies solely on the received signal and assumes a uniform distribution of the transmitted symbols.

\subsubsection{\ac{SIC}}
As can be seen from \cref{fig:ch4:scheme}, the received signal $\mathbf{y}$ at the \ac{PD} is the input to \ac{SIC} block. After applying \ac{SIC} as described in \eqref{eq:ch4:RxSignalj}, the output of \ac{SIC} block is $\widehat{\mathbf{y}}_{j}\triangleq\left[\widehat{y}^{1}_{j},\widehat{y}^{2}_{j},\ldots, \widehat{y}^{n}_{j}\right]$, where $\widehat{y}^{t}_{j}$ is the received signal at the decoder of user $j$ at time instant $t\in\{1, 2, \ldots, n\}$.

\subsubsection{MAP Calculation}
Let us consider the $M$-\ac{PAM} symbol transmitted by user $j$ at time instant $t$, denoted by $x^{t}_{j}$. It consists of individual bit levels represented by \acp{r.v.} $B^{t}_{j,l}\in\{0,1\}$, where $l$ denotes the bit level. Therefore, its binary representation is
\begin{align}
    \mathbb{B}\left(x^{t}_{j}\right)=\left[B^{t}_{j,1}, B^{t}_{j,2}, \cdots, B^{t}_{j,\log\left(M\right)}\right]\mbox{.}
\end{align}
The \acp{r.v.} representing $x^{t}_{j}$ and $\widehat{y}^{t}_{j}$ are $X^{t}_{j}$ and $\widehat{Y}^{t}_{j}$, respectively. Therefore, the \ac{MAP} of the $l$th bit level given $\widehat{y}^{t}_{j}$ is 
\begin{align}\label{eq:ch4:LLR}
L^{j}_{t,l}&=\log\left(\frac{f_{B^{t}_{j,l}|\widehat{Y}^{t}_{j}}\left(0|\widehat{y}^{t}_{j}\right)}{f_{B^{t}_{j,l}|\widehat{Y}^{t}_{j}}\left(1|\widehat{y}^{t}_{j}\right)}\right) \nonumber \\
    &=\log\left(\frac{\sum\limits_{x_{j} \in \mathbb{X}_{l}^{0}}f_{\widehat{Y}^{t}_{j}|X^{t}_{j}}\left(\widehat{y}^{t}_{j}|x\right)\mathbb{P}\{X^{t}_{j}=x_{j}\}}{\sum\limits_{x_{j} \in \mathbb{X}_{l}^{1}}f_{\widehat{Y}^{t}_{j}|X^{t}_{j}}\left(\widehat{y}^{t}_{j}|x\right)\mathbb{P}\{X^{t}_{j}=x_{j}\}}\right) \mbox{,}
\end{align}
where $\mathbb{X}_{l}^{0}$ and $\mathbb{X}_{l}^{1}$ are the sets containing all the values of $x_{j}$ whose $l$th bit level is $0$ and $1$, respectively. Please note that, in \ac{SDT} signal, the distribution of $X^{t}_{j}$ depends on $t$. Hence, for $t\in\{n\Rfec+1,n\Rfec+2,\ldots,n\}$, $\mathbb{P}\{X^{t}_{j}=x_{j}\}=1/M$. 

\subsubsection{FEC Decoder and Dematcher}
Since $L^{j}_{t,l}$ serves as the sufficient statistic for estimating the bit level $l$ based on the received symbol $\widehat{y}^{t}_{j}$ \cite{bocherer2015bandwidth}, the \ac{FEC} soft decoder takes \ac{MAP} values obtained in \eqref{eq:ch4:LLR} as input for error correction. Consequently, the output of the \ac{FEC} soft decoder is denoted as $\hat{{\mathbf z}}_{\text p}\in \{0,1\}^{\log(M)n_{ p}}$. Following this, the corresponding unipolar $M$-\ac{PAM} symbols, $\widehat{{\mathbf s}}_{\text p}\in\{\Delta^{\star}_{j}, 2\Delta^{\star}_{j},\cdots, M\Delta^{\star}_{j}\}^{n_{\text{p}}}$, are obtained from $\hat{{\mathbf z}}_{\text p}$ using a binary demapper. Lastly, the distribution dematching process maps the symbols from $\widehat{{\mathbf s}}_{\text p}$ to the corresponding recovered information bits of user $j$, denoted as $\widehat{\mathbf u}_{j}\in \{0,1\}^{k}$, which are subsequently re-encoded and re-modulated for utilization in \ac{SIC} operations.

\section{Rate Analysis}\label{sec:ch4:CapAnalysis}
To find the capacity-achieving distribution of the proposed scheme, we need to derive its \ac{AR} and \ac{TR}. \ac{AR} of the proposed scheme is derived accounting for \textit{i)} constraints imposed by the \ac{IM/DD} such as non-negative and real-valued input signal; \textit{ii)} \ac{NOMA} features such as \ac{OMAC} properties and employment of the \ac{SIC} detection; \textit{iii)} the \ac{SDT} signalling in \eqref{eq:ch4:signalTX}, where only part of the transmitted symbols can be probabilistically shaped. 

 \subsection{Achievable Rate of the Proposed Scheme}
 In this subsection, we derive the \ac{AR} of the proposed scheme. Let $\widehat{Y}_{j}$ be a \ac{r.v.} representing the received signal in \eqref{eq:ch4:RxSignalj}. According to the \ac{SIC} decoding principle in \ac{NOMA}, the received signal at the decoder of user $j$ (i.e., $\widehat{y}_{j}$) includes interference signals from users $j+1$ to $N$, which have constellation spacing values denoted by $\DeltaIntf\triangleq\left[\Delta_{j+1},\Delta_{j+2},\cdots,\Delta_{N}\right]$ and follow \acp{PMF} denoted by $\PdIntrf\triangleq\left[\mathbf{p}^{T}_{j+1},\mathbf{p}^{T}_{j+2},\cdots,\mathbf{p}^{T}_{N}\right]$, respectively. We use $\AIntrf\triangleq\left[\DeltaIntf,\PdIntrf\right]$ to represent the spacing values and \acp{PMF} of the interference signals in $\widehat{y}_{j}$.  

An \ac{AR} of user $j$ of the proposed scheme is 
\begin{align}
\!\Rsdt&\!\triangleq\!\Rfec\Roimac\!\nonumber \\
&\phantom{aaa}+\!\left(1\!-\!\Rfec\right)\Ru\!\!\mbox{,}
\end{align}
where $\mathbf{p}^{u}_{j}\triangleq\left[1/M, 1/M, \ldots, 1/M\right]^{T}$ and  
\begin{align}
\Roimac&= \mathcal{I}\left(X_{j}; \widehat{Y}_{j}\right)\nonumber \\
&=\mathcal{H}\left(X_{j}\right)-\mathcal{H}\left(X_{j} | \widehat{Y}_{j}\right) \nonumber \\
   &=-\sum_{i=1}^{M} p^{i}_{j}\log\left( p^{i}_{j}\right)-\mathcal{H}\left(X_{j} | \widehat{Y}_{j}\right),
\end{align}
where $\mathcal{I}\left(X_{j}; \widehat{Y}_{j}\right)$ represents the \ac{MI} between transmitted signal $X_{j}$ and received signal $\widehat{Y}_{j}$, $\mathcal{H}\left(X_{j}\right)$ is the entropy of the \ac{r.v.} $X_{j}$, and $\mathcal{H}\left(X_{j} | \widehat{Y}_{j}\right)$ is the conditional entropy, which can be derived as
\begin{align}\label{eq:ch4:H(X|Y)_entr}
    \mathcal{H}\!\!\left(\!X_{j} | \widehat{Y}_{j}\!\right)&\!\!=\!\!-\!\!\!\!\int\limits_{-\infty}^{\infty}\!\!\!f_{\!\widehat{Y}_{j}}\!(\widehat{y}_{j})\!\sum_{i=1}^{M} f_{\!\!X_{j}|\widehat{Y}_{j}}\!\!\left(s_{ji}|\widehat{y}_{j}\right)\!\log\!\left(\!f_{\!X_{j}|\widehat{Y}_{j}}\!\left(s_{ji}|\widehat{y}_{j}\right)\!\right)\!\mathrm{d}\widehat{y}_{j} \nonumber \\
    &\!\!=\!\!-\!\!\!\!\int\limits_{-\infty}^{\infty}\!\!\!\! f_{\widehat{Y}_{j}}\!\!(\widehat{y}_{j})\!\!\sum_{i=1}^{M} \frac{f_{\widehat{Y}_{j}|X_{j}}(\widehat{y}_{j}|s_{ji}) f_{X_{j}}(s_{ji})}{f_{\widehat{Y}_{j}}(\widehat{y}_{j})} \nonumber \\
    &\phantom{===}\times\log\left(\frac{f_{\widehat{Y}_{j}|X_{j}}(\widehat{y}_{j}|s_{ji}) f_{X_{j}}(s_{ji})}{f_{\widehat{Y}_{j}}(\widehat{y}_{j})} \right)\mathrm{d}\widehat{y}_{j} \nonumber \\
    &\!\!=\!\!-\!\!\!\!\int\limits_{-\infty}^{\infty}\!\! \sum_{i=1}^{M} f_{\widehat{Y}_{j}|X_{j}}(\widehat{y}_{j}|s_{ji}) f_{X_{j}}(s_{ji}) \nonumber \\
    &\phantom{===}\times\log\left(\frac{f_{\widehat{Y}_{j}|X_{j}}(\widehat{y}_{j}|s_{ji}) f_{X_{j}}(s_{ji})}{f_{\widehat{Y}_{j}}(\widehat{y}_{j})} \right)\mathrm{d}\widehat{y}_{j} \nonumber \\
    &\!\!=\!\!\int\limits_{-\infty}^{\infty}\!\! \sum_{i=1}^{M} f_{\widehat{Y}_{j}|X_{j}}(\widehat{y}_{j}|s_{ji}) f_{X_{j}}(s_{ji}) \nonumber \\
    &\phantom{===}\times\log\left(\frac{f_{\widehat{Y}_{j}}(\widehat{y}_{j})}{f_{\widehat{Y}_{j}|X_{j}}(\widehat{y}_{j}|s_{ji}) f_{X_{j}}(s_{ji})} \right)\mathrm{d}\widehat{y}_{j} \mbox{,}
\end{align}
where 
\begin{align}
f_{\widehat{Y}_{j}|X_{j}}\!\!\!\left(\widehat{y}_{j}|s_{ji}\right)\!\!&=\!\!\!\!\!\!\!\!\!\!\!\!\!\!\!\!\!\!\!\!\!\!\!\sum\limits_{\{s_{k}\}_{k=j+1}^{N} \in \{\mathcal{S}_{k}\}_{k=j+1}^{N}} \!\!\!\!\!\!\!\!\!\!\!\!\!\!\!\!\!\!\!\!f_{\widehat{Y}_{j}|\{X_{k}\}_{k=j}^{N}}\!\!\left(\widehat{y}_{j}|s_{ji},\{s_{k}\}_{k=j+1}^{N}\right)\!\!\!\!\!\prod\limits_{k=j+1}^{N} \!\!\!\!f_{X_{k}}\!\!\left(s_{k}\right)
    \nonumber \\ 
    &=\!\!\!\!\!\!\!\!\!\!\!\!\!\!\!\!\!\!\!\!\!\!\sum\limits_{\{s_{k}\}_{k=j+1}^{N} \in \{\mathcal{S}_{k}\}_{k=j+1}^{N}}\!\!\!\!\!\!\!\!\!\!\!\!\!\!\!\!\!\!\!\!\!\frac{e^{\left(\!\frac{-\left(\widehat{y}_{j}\!-\!h_{j}\alpha_{j}s_{ji}-\!\!\!\sum\limits_{k=j+1}^{N}\!\!\!\!\!h_{k}\alpha_{k}s_{k}\right)^{2}}{2\sigma^{2}}\!\right)}}{\sqrt{2\pi\sigma^{2}}}\!\!\!\!\!\prod\limits_{k=j+1}^{N} \!\!\!\!f_{X_{k}}\!\!\left(s_{k}\right)\!\mbox{,}
\end{align}
and where
\begin{align}
    f_{\widehat{Y}_{j}}\left(\widehat{y}_{j}\right)&=\sum_{i^{\prime}=1}^{M} f_{\widehat{Y}_{j}|X_{j}}(\widehat{y}_{j}|s_{ji^{\prime}}) f_{X_{j}}(s_{ji^{\prime}})\mbox{,}
\end{align}
\begin{align}
    f_{X_{j}}(s_{ji})=p^{i}_{j}\mbox{.}
\end{align}

Note that $\Roimac$ is the \ac{AR} of user $j$ for \ac{NOMA} with \ac{IM/DD}, where all of the transmitted symbols can be probabilistically shaped.

\subsection{Transmission Rate of the Proposed Scheme}\label{subsec:ch4:RatesProposedScheme}
We now shift our focus to deriving the \ac{TR} achieved by the $\scheme$ scheme. The \ac{TR} is defined as a fraction of information bits over the number of channel uses. The \ac{TR} of user $j$ for the proposed scheme shown in \cref{fig:ch4:scheme} can be derived as
\begin{align}
    \!\!\mathrm{T}_{j}\!\left(\mathbf{p}_{j}\right) 
    \!\triangleq\!\frac{k}{n} 
    &\!=\!\frac{n_{p}\mathcal{H}\left(S_{\text{p}}\right)}{n}\!=\!-\Rfec \sum\limits_{i=1}^{M}p^{i}_{j} \log\left(p^{i}_{j} \right)\!\mbox{,} 
    \label{eq:ch4:Txrate}
\end{align}
where $k$ is the number of information bis, and $n$ is the frame size. Note that the following condition $\mathrm{T}_{j}\leq\mathrm{R}^{\mathrm{SDT}}_{j}$ should be satisfied in order to guarantee reliable communication. When this condition is satisfied, the \ac{TR} of the proposed scheme is achievable.

\section{Rate Adaptation}\label{sec:ch4:RateAdaptation}
 The TR of the $\scheme$ scheme for user $j$ can be adapted by adjusting the distribution of unipolar $M$-PAM symbols, denoted as $\mathbf{p}_j$, and the channel coding rate, denoted as $\Rfec$. In this section, we propose an algorithm to maximize the achievable \ac{TR} by optimizing 
 $\mathbf{p}_{j}$ and $\Delta_{j}$ for given \ac{OSNR} and $\Rfec$.

\subsection{Optimization Problem}\label{subsec:ch4:optimizationTx}
In this subsection, we delve into the rate adaptation problem by presenting an optimization problem that allows us to mathematically formulate and seek the optimal $\mathbf{p}_{j}$ and $\Delta_{j}$. To determine the \ac{TR} of each user, it is necessary to jointly optimize $\mathbf{p}_{j}$ and $\Delta_{j}$ for every user. Consequently, there are $N$ separate optimization problems that need to be solved sequentially.

Considering the power and rate constraints, the maximum achievable \ac{TR} of user $j$ of the $\scheme$ scheme for the given $\Rfec$ and \ac{OSNR} can be obtained from 
\begin{subequations}\label{optimization:ch4:MI:Tx}
\begin{alignat}{2}
&\underset{\Delta_{j},\mathbf{p}_{j}}{\operatorname{maximize}} \quad && \mathrm{T}_{j}\left(\mathbf{p}_{j}\right) \\
&\text{subject to} \quad &&\Rfec\;\mathbf{s}^{T}_{j}\;\mathbf{p}_{j}+\left(1-\Rfec\right)\mathbf{s}^{T}_{j}\;\mathbf{p}^{u}_{j}=P_{rj}, \label{eq:ch4:Power:constraint1}\\
& \quad && \mathrm{T}_{j}\left(\mathbf{p}_{j}\right) \leq \Rsdtopt-\Rbf ,\label{eq:ch4:TX:constraint}\\
&\quad && \Delta_{j,\min}\leq\Delta_{j}\!\leq\!\Delta_{j,\max},\label{eq:ch4:Deltaj:constraint}\\
& \quad &&  \sum_{i=1}^{M} p^{i}_{j}=1,  \\
& \quad &&   p^{i}_{j} \geq 0, \quad \forall i \in\mathbb{M}  \mbox{,}
\end{alignat}
\end{subequations}
where $\AIntrf^{\star}\triangleq\left[\DeltaIntf^{\star},\PdIntrf^{\star}\right]$, $\Rbf$ is the back-off rate to compensate the rate loss due to the use of \ac{BMD} and finite-length distribution matcher and 
\begin{align}
\Delta_{j,\min}&\triangleq \frac{P_{rj}}{\left(\Rfec M+\left(1-\Rfec\right)\frac{M+1}{2}\right)}\mbox{,}\\ 
\Delta_{j,\max}&\triangleq \frac{P_{rj}}{\left(\Rfec+\left(1-\Rfec\right)\frac{M+1}{2}\right)}\mbox{.}
\end{align}
The constraint \eqref{eq:ch4:Power:constraint1} ensures that the total power, taking into account the coding rate and symbol distribution, is equal to the received power. The minimum and maximum symbol spacings, denoted as $\Delta_{j,\min}$ and $\Delta_{j,\max}$, are derived based on the power allocation and coding rate to satisfy the power constraint. The constraint \eqref{eq:ch4:TX:constraint} guarantees reliable communication with an achievable \ac{TR}.

In \ac{NOMA} with \ac{SIC}, the decoding order starts from user $1$ and ends at user $N$. The received signal of the last decoded user is given by $\widehat{y}_{N}=h_{N}x_{N}+w$, where there is no interference from other users. For the received signal of the prior decoded user (i.e., $\widehat{y}_{j<N}$), we observe the presence of interference from both \ac{AWGN} and multiuser interference from users $j+1$ to $N$. To ensure optimal
distribution of the user $j<N$ signal, it is essential to account for this inter-user interference. Therefore, the order of optimization is opposite to the \ac{SIC} decoding order, which means that the optimization problem \eqref{optimization:ch4:MI:Tx} is solved starting from user $N$ and then progressing backward to user $1$. For example, we first find $\mathbf{p}^{\star}_{N}$ and $\Delta^{\star}_{N}$ for user $N$ from \eqref{optimization:ch4:MI:Tx}, where $\mathbf{a}^{\star}_{\widehat{y}_{N}}$ is an empty vector. Next, we obtain $\mathbf{p}^{\star}_{N-1}$ and $\Delta^{\star}_{N-1}$ for user $N-1$ from \eqref{optimization:ch4:MI:Tx}, taking into account the interference signal from user $N$ in $\widehat{y}_{N-1}$, which follows the distribution given by $\mathbf{p}^{\star}_{N}$ and has constellation spacing $\Delta^{\star}_{N}$, i.e., $\mathbf{a}^{\star}_{\widehat{y}_{N-1}}=\left[\Delta^{\star}_{N},\mathbf{p}^{\star T}_{N}\right]$. We continue this process to obtain the optimal \ac{PMF} and spacing for the remaining users in accordance with the optimization order.

\subsection{Proposed Algorithm}
This subsection proposes an algorithm to efficiently find the capacity-approaching distribution of the proposed scheme. Optimizing $\mathbf{p}_{j}$ and $\Delta_{j}$ jointly in \eqref{optimization:ch4:MI:Tx} is excessively complex non-convex problem. Therefore, we propose an iterative approach where one variable is optimized while the other is fixed alternately. Nevertheless, optimizing $\mathbf{p}_{j}$ while fixing $\Delta_{j}$ results in the non-convexity of the optimization problem due to \eqref{eq:ch4:TX:constraint}. To address this challenge, we employ a surrogate function, a mathematical construct designed to approximate the original non-convex objective function with a convex counterpart, inspired by the work of \cite{vontobel2008generalization}. This surrogate function plays a crucial role in convexifying our problem, allowing the algorithm to iteratively approach the optimized solution. With each iteration, our approach converges toward a solution that balances computational efficiency and optimality. In this regard, $\Rsdtopt$ can be rewritten as 
\begin{align}
\Rsdtopt\!\!&=\!\!\Rfec\!\left(g_1\!\left(\mathbf{p}_{j}\right)\!-\!g_2\!\left(\Delta_{j},\mathbf{p}_{j},\AIntrf^{\star}\right)\!\right) \nonumber\\
   &\phantom{==}+\!\left(1-\Rfec\right)\!\!\Ruopt\!\!\mbox{,}
\end{align}
where 
\begin{align}\label{eq:ch4:g1}
    g_1\left(\mathbf{p}_{j}\right)\!\triangleq-\sum_{i=1}^{M} p^{i}_{j}\log\left( p^{i}_{j}\right)\mbox{,}
\end{align}
\begin{align}\label{eq:ch4:g2}
    g_2\!\left(\Delta_{j},\mathbf{p}_{j},\AIntrf^{\star}\right)\triangleq \sum_{i=1}^{M} p^{i}_{j}\,  \Wip\mbox{,}
\end{align}
and 
\begin{align}\label{eq:ch4:Wpi}
\Wip\!\!&\triangleq\!\!\!\!\int\limits_{-\infty}^{\infty}\!\!\!\! f_{\widehat{Y}_{j}|X_{j}}(\widehat{y}_{j}|s_{ji}) \nonumber \\
&\phantom{aaaaa}\times\log\!\!\left(\frac{f_{\widehat{Y}_{j}}(\widehat{y}_{j})}{f_{\widehat{Y}_{j}|X_{j}}(\widehat{y}_{j}|s_{ji}) p^{i}_{j}} \right)\!\!\mathrm{d}\widehat{y}_{j} \mbox{.}
\end{align}
Optimizing $\mathrm{R}^{\mathrm{SDT}}_{j}$ with respect to $\mathbf{p}_{j}$ is complex and results in a coupling problem because the function \eqref{eq:ch4:g2} cannot be represented as a linear summation of concave functions with optimization variables $p^{i}_{j}$. Thus, we introduce the surrogate function, $\Phip$, which can be used to approximate $\mathrm{R}^{\mathrm{SDT}}_{j}$ around $\mathbf{p}_{j}=\hat{\mathbf{p}}_{j}$, i.e.,
\begin{align}\label{eq:ch4:phi:p}
   \Phip&\triangleq\!\!\Rfec\!\left(g_1\!\left(\mathbf{p}_{j}\right)\!-\!g^{\prime}_2\!\left(\Delta_{j},\mathbf{p}_{j},\hat{\mathbf{p}}_{j},\AIntrf^{\star}\right)\!\right) \nonumber\\
   &\phantom{=======}+\!\left(1-\Rfec\right)\!\!\Ruopt\!\!\mbox{,}
\end{align}
where 
\begin{align}
g^{\prime}_2\!\left(\Delta_{j},\mathbf{p}_{j},\hat{\mathbf{p}}_{j},\AIntrf^{\star}\right)\triangleq \sum_{i=1}^{M} p^{i}_{j}\,  \Wiphat\mbox{.}
\end{align}
 According to the relative entropy properties \cite{vontobel2008generalization}, if $\hat{\mathbf{p}}_{j}=\mathbf{p}_{j}$, then $g^{\prime}_2\!\left(\Delta_{j},\mathbf{p}_{j},\hat{\mathbf{p}}_{j},\AIntrf^{\star}\right)=g_2\!\left(\Delta_{j},\mathbf{p}_{j},\AIntrf^{\star}\right)$, otherwise $g^{\prime}_2\!\left(\Delta_{j},\mathbf{p}_{j},\hat{\mathbf{p}}_{j},\AIntrf^{\star}\right)> g_2\!\left(\Delta_{j},\mathbf{p}_{j},\AIntrf^{\star}\right)$. Thus, given $\mathrm{R}_{\mathrm{FEC}}$, $\Rbf$, initial \ac{PMF} $\hat{\mathbf{p}}_{j}$, and \ac{OSNR}, the convexified version of \eqref{optimization:ch4:MI:Tx} for fixed $\Delta_{j}$ is 
 \begin{subequations}\label{optimization:ch4:MI:Tx:convex}
\begin{alignat}{2}
\!\!\!\!\mathbf{p}^{\star}_{\Delta_{j}}=&\underset{\mathbf{p}_{j}}{\operatorname{maximize}} \quad && \mathrm{T}_{j}\left(\mathbf{p}_{j}\right) \\
&\text{subject to} \quad &&\Rfec\mathbf{s}^{T}_{j}\mathbf{p}_{j}\!+\!\left(1\!-\!\Rfec\right)\mathbf{s}^{T}_{j}\mathbf{p}^{u}_{j}\!=\!P_{rj},\\
& \quad && \mathrm{T}_{j}\left(\mathbf{p}_{j}\right) \leq \Phip-\Rbf, \label{eq:ch4:TX:constr:convex} \\
& \quad &&  \sum_{i=1}^{M} p^{i}_{j}=1,  \\
& \quad &&   p^{i}_{j} \geq 0, \quad \forall i \in\mathbb{M} \mbox{.}
\end{alignat}
\end{subequations}
Since $\Phip\leq\Phi\left(\mathbf{p}_{j},\mathbf{p}_{j}\right)=\Rsdtopt$, the constraint \eqref{eq:ch4:TX:constraint} is not violated by \eqref{eq:ch4:TX:constr:convex}.

We use \ac{KKT} optimality conditions to solve the convex optimization problem \eqref{optimization:ch4:MI:Tx:convex}. The Lagrangian function can be expressed as 
\begin{align}
    L&=-\mathrm{T}_{j}\left(\mathbf{p}_{j}\right)+\!\mu\!\!\left(\sum_{i=1}^{M} p^{i}_{j}-\!\!1\!\!\right)\!+\!\eta\!\left(\Rfec\sum_{i=1}^{M} \Delta\, i\, p^{i}_{j}\!-\!P_{j}\!\!\right)\nonumber \\
    &\phantom{=====}+\!\tau\left(\mathrm{T}_{j}\left(\mathbf{p}_{j}\right)-\Phip+\!\Rbf\right)\!\mbox{,}
\end{align}
where $P_{j}\triangleq P_{rj}-\left(1\!-\!\Rfec\right)\mathbf{s}^{T}_{j}\mathbf{p}^{u}_{j}$, and the Lagrangian multipliers are denoted as $\mu$, $\eta$ and $\tau$. The solution to \eqref{optimization:ch4:MI:Tx:convex} using \ac{KKT} optimality conditions is shown in the following lemma 

\begin{lemma}[\Ac{KKT} solution to \eqref{optimization:ch4:MI:Tx:convex}]\label{Lemma1:ch4}
For the given $\Delta_{j}$, $\mathrm{R}_{\mathrm{FEC}}$, $\hat{\mathbf{p}}_{j}$, and \ac{OSNR},
\begin{align}
p^{\star i}_{\Delta_{j}}\!\!=\!\!
\begin{cases}
\frac{2^{-\eta\,\Delta\, i}}{\sum\limits_{i^{\prime}=1}^{M} 2^{-\eta\,\Delta\, i^{\prime}}}
&\!\!\!\text{if}\;\CondOnePower_{1}\;\text{and}\;\CondOneRate_{1}\;\text{are}\;\text{satisfied} \\
\frac{2^{-\eta\,\!\Delta\, i-\!\tau \Wiphat}}{\sum\limits_{i^{\prime}\!=\!1}^{M} \!\!2^{-\eta\,\!\Delta\, \!i^{\prime}\!-\!\tau \Wiprimephat}} 
&\!\!\!\text{if}\;\CondOnePower_{2}\;\text{and}\;\CondOneRate_{2}\;\text{are}\;\text{satisfied} \mbox{,}
\end{cases}
\end{align}
where
\begin{align}
    \CondOnePower_{1}\triangleq\sum\limits_{i^{\prime}=1}^{M}\left(\Rfec\Delta\, i-\!P_{j}\right)2^{-\eta\Delta i}=0 \mbox{,}
    \label{eq:ch4:LemmaPower1}
\end{align}

\begin{align}\label{eq:ch4:LemmaRate1}
    \CondOneRate_{1}\triangleq\sum_{i=1}^{M} \left[\Rfec \Wiphat-D\right]2^{-\eta\,\Delta\, i}<0 \mbox{,} 
\end{align}

\begin{align}
    \CondOnePower_{2}\triangleq\sum\limits_{i^{\prime}=1}^{M}\left(\Rfec\Delta\, i-\!P_{j}\right)2^{-\eta\,\Delta\, i-\!\tau \Wiphat}=0 \mbox{,}
    \label{eq:ch4:LemmaPower2}
\end{align}
and 
\begin{align}\label{eq:ch4:LemmaRate2}
     \CondOneRate_{2}\triangleq\!\!\!\sum_{i=1}^{M}\!\left[\Rfec \Wiphat\!-\!\!D\right]\frac{2^{-\eta\,\Delta\, i}}{2^{\!\tau \Wiphat}}\!=\!0 \mbox{,}  
\end{align}
where $D=\left(1-\Rfec\right)\Ruopt-\Rbf$, $\tau$ and $\eta$ are any positive real numbers.
\end{lemma}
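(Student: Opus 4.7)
The plan is to invoke the KKT optimality conditions for the convexified problem \eqref{optimization:ch4:MI:Tx:convex}. With $\Delta_{j}$ held fixed, the objective $\mathrm{T}_{j}(\mathbf{p}_{j})$ is a (concave) Shannon entropy functional of $\mathbf{p}_{j}$, and the key point is that the surrogate rate constraint \eqref{eq:ch4:TX:constr:convex} is affine in $\mathbf{p}_{j}$: by construction, $\Phip$ replaces the troublesome $g_{2}$ term by $g_{2}'(\Delta_{j},\mathbf{p}_{j},\hat{\mathbf{p}}_{j},\AIntrf^{\star}) = \sum_{i} p^{i}_{j}\,\Wiphat$, which is linear in $\mathbf{p}_{j}$ since $\hat{\mathbf{p}}_{j}$ is frozen and $\Ruopt$ has no $\mathbf{p}_{j}$-dependence. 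The power-budget equality is likewise affine, and the remaining constraints form a probability simplex. Hence the problem is convex and the KKT conditions are both necessary and sufficient.

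First I would write the stationarity condition $\partial L / \partial p^{i}_{j} = 0$. A clean cancellation occurs here: because $\Phip$ contains $\Rfec\, g_{1} = \mathrm{T}_{j}$, the contribution of the entropy inside the rate-multiplier term $\tau(\mathrm{T}_{j} - \Phip + \Rbf)$ exactly cancels an accompanying $\tau\,\partial \mathrm{T}_{j}/\partial p^{i}_{j}$ produced by the chain rule when differentiating $\Phip$. What remains is
\begin{align*}
  \Rfec \log p^{i}_{j} + \tfrac{\Rfec}{\ln 2} + \mu + \eta\, \Rfec\, \Delta_{j}\, i + \tau\, \Rfec\, \Wiphat = 0\mbox{,}
\end{align*}
from which I would solve for $p^{i}_{j}$ and absorb the $\mu$- and $(\ln 2)^{-1}$-dependent constants into the normalization $\sum_{i'} p^{i'}_{j} = 1$. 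This yields the Gibbs-type form $p^{\star i}_{\Delta_{j}} \propto 2^{-\eta\, \Delta_{j}\, i\, -\, \tau\, \Wiphat}$.

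Next I would invoke complementary slackness on $\tau \geq 0$. In the case $\tau = 0$, the factor $2^{-\tau\, \Wiphat}$ trivializes and produces the first branch of the lemma. The rate constraint must then hold with strict inequality; substituting the explicit $p^{\star i}_{\Delta_{j}}$ into $\mathrm{T}_{j} - \Phip + \Rbf < 0$, using the identity $\Phip = \mathrm{T}_{j} - \Rfec \sum_{i} p^{i}_{j}\,\Wiphat + (1-\Rfec)\Ruopt$ so that the $\mathrm{T}_{j}$ terms cancel, and multiplying through by the positive normalizer, recovers $\CondOneRate_{1}$ in \eqref{eq:ch4:LemmaRate1}; the power-budget equality becomes $\CondOnePower_{1}$ in \eqref{eq:ch4:LemmaPower1}. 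In the case $\tau > 0$, the rate constraint is active; repeating the substitution with equality yields $\CondOneRate_{2}$ in \eqref{eq:ch4:LemmaRate2}, and the power equality becomes $\CondOnePower_{2}$ in \eqref{eq:ch4:LemmaPower2}. This is the second branch.

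The main obstacle is bookkeeping rather than a deep technical step: one must track the surrogate's structure carefully so that the cancellation of the entropies is transparent, and recognize that the lemma does not pin down $(\eta,\tau)$ uniquely --- they are whichever positive reals make the relevant pair $(\CondOnePower_{k},\CondOneRate_{k})$, $k \in \{1,2\}$, simultaneously consistent with the normalization. A minor subtlety is that the simplex non-negativity multipliers can be taken to vanish, since the entropy objective diverges at the boundary and forces every $p^{\star i}_{\Delta_{j}} > 0$, so no case analysis on active coordinate bounds is needed.
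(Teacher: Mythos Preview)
Your proposal is correct and follows essentially the same route as the paper's Appendix~\ref{apendix:ch4:KKT:conditions}: form the Lagrangian, write the stationarity condition (the paper states \eqref{eq:ch4:dL_dpi} directly without spelling out the $\mathrm{T}_{j}$--$g_{1}$ cancellation you highlight), solve for $p^{i}_{j}$ in Gibbs form, eliminate $\mu$ via normalization, and then split on complementary slackness $\tau=0$ versus $\tau>0$ to recover the two branches together with their power and rate conditions. Your additional remarks on convexity and on why the non-negativity multipliers can be dropped (entropy barrier at the simplex boundary) are points the paper leaves implicit, but otherwise the arguments coincide.
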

\begin{proof}
The detailed derivation is shown in Appendix \ref{apendix:ch4:KKT:conditions}.
\end{proof}

\begin{algorithm}[t]
  	\caption{Proposed algorithm to find optimal $\mathbf{p}_{j}$ in \eqref{optimization:ch4:MI:Tx:convex}} 
    \label{algorithm:ch4}
  	\begin{algorithmic} [1]
  		\STATE\textbf{Inputs:} 
  		$\Delta_{j}$, $\Rfec$, $\sigma$, $\Rbf$, and $\gamma_{s}>0$  \%{\small  $\gamma_{s}$ is the stopping threshold}
  		\STATE\textbf{Initialize:} vector
  	    $\hat{\mathbf{p}}_{j}$ which satisfies constraints in \eqref{optimization:ch4:MI:Tx:convex}
  		\REPEAT 
  		\STATE \textbf{Step 1:} For each $i \in \{1, 2, \cdots, M\}$, \\ \phantom{======.}compute $W\!\left(i,\Delta_{j},\hat{\mathbf{p}}_{j},\AIntrf^{\star}\right)$
  		\STATE \textbf{Step 2:} Obtain new distribution vector $\mathbf{p}^{\star}_{\Delta_{j}}$ from Lemma \ref{Lemma1:ch4}\\
  		\STATE \textbf{Step 3:} $\epsilon=||\hat{\mathbf{p}}_{j}-\mathbf{p}^{\star}_{\Delta_{j}}||^2$
  		\STATE \textbf{Update:} $\hat{\mathbf{p}}_{j}:=\mathbf{p}^{\star}_{\Delta_{j}}$
  		
  		\UNTIL{$\epsilon <\gamma_{s}$}
  		\STATE \textbf{Output:} $\mathbf{p}^{\star}_{\Delta_{j}}$
  	\end{algorithmic}
  \end{algorithm}
The proposed algorithm leverages Lemma \ref{Lemma1:ch4} to efficiently find optimal \ac{PMF} of user $j$ for the given $\Delta_{j}$, i.e., $\mathbf{p}^{\star}_{\Delta_{j}}$. The algorithm updates $\hat{\mathbf{p}}_{j}$ in each iteration by computing the new distribution vector $\mathbf{p}^{\star}_{\Delta_{j}}$ and calculating the distance between $\hat{\mathbf{p}}_{j}$ and $\mathbf{p}^{\star}_{\Delta_{j}}$. The iteration continues until the distance falls below the threshold $\gamma_{s}$, indicating convergence. Finally, the algorithm outputs the optimal distribution vector $\mathbf{p}^{\star}_{\Delta_{j}}$.

  To determine the optimized constellation spacing, $\Delta^{\star}_{j}$, that maximizes $\Rsdtopt$ for fixed $\mathbf{p}_{j}$, we employ the golden section method. The range of search for $\Delta_{j}$ is specified in \eqref{eq:ch4:Deltaj:constraint}. \cref{algorithm:ch4} and golden section method are iterated for each of the \ac{FEC} coding rates of a \ac{LDPC} channel encoder in a DVB-S2 receiver \cite{Dvbs2x:15}. For a given \ac{OSNR}, the optimal \ac{FEC} coding rate, $\Rfec^{\star}$, is the one which results in the highest achievable \ac{TR}.

\subsection{Computational Complexity}\label{subsec:ch4:complexity}
This subsection analyzes the computational complexity of the $\scheme$ scheme. The major portion of the computational complexity results from solving the optimization problem to get the capacity-achieving distribution. 

The proposed approach adopts an iterative alternate optimization method, necessitating the execution of \cref{algorithm:ch4} for each constellation spacing value. We denote $v_{p}$ as the number of iterations required to meet the convergence criteria in \cref{algorithm:ch4}. As shown in Table \ref{table:ch4:Comp:Compelxity}, $v_{p}$ exhibits an approximately linear increase with the constellation size $M$. Additionally, $v_{\Delta}$ represents the number of iterations required to converge to the optimal constellation spacing value in the golden section algorithm.

Furthermore, we define $v_{fec}$ as the number of coding rates accommodated by the \ac{FEC} encoder for the given $M$-ary modulation. Consequently, finding the optimal parameters for $N$ users with a specific \ac{OSNR} necessitates solving $Nv_{p}v_{\Delta}v_{fec}$ convex optimization problems. The computational complexity of finding a solution to a single convex optimization problem for an $M$-ary modulation, employing the interior-point method, scales as $\mathcal{O}\left(M^{3}\right)$ \cite{potra2000interior}.

\begin{table}[t!]
  \caption{Convergence of Algorithm \ref{algorithm:ch4}}
\centering
 \begin{tabular}{||c | c||}
 \hline
 M & $v_{p}$: Number of Iterations until convergence \\ [0.5ex] 
 \hline\hline
 2 & 1\\
 \hline
  4 & 6\\
 \hline
  8 & 26\\
   \hline
  16 & 403\\
 \hline
  64 & 512\\[1ex]
 \hline
\end{tabular}
\label{table:ch4:Comp:Compelxity}
\end{table}

The optimization process can be carried out either online or offline, considering the computational resources available at the transmitting side. Offline optimization involves obtaining the optimal constellation size, \ac{FEC} rate, symbol distribution and constellation spacing for a predefined set of \acp{OSNR}. These optimized parameters are then stored in memory, which reduces the computational complexity during transmission. However, the discretization process may result in some rate loss.

\section{Numerical Results}\label{sec:ch4:PerfAnalysis}
In this section, we present numerical results to evaluate the performance of our $\scheme$ scheme in terms of \ac{SE} and \ac{FER} considering two users, i.e., $N=2$. \ac{FER} is chosen as the performance metric instead of \ac{BER} because \ac{FER} is considered to be a more robust metric than \ac{BER} \cite{bocherer2015bandwidth}. Channel coding is performed using \ac{LDPC} codes specified in DVB-S2 \cite{Dvbs2x:15}. The block length used for all codes in DVB-S2 is fixed at 64800 bits. The optimal parameters such as  distribution, constellation spacing, and
\ac{FEC} rate are determined by \cref{algorithm:ch4} outlined in \cref{sec:ch4:RateAdaptation} with $\Rbf=0.05$. Without loss of generality, the received \ac{OSNR} is defined as $P_{r1}/\sigma$ in (dB) units.
 
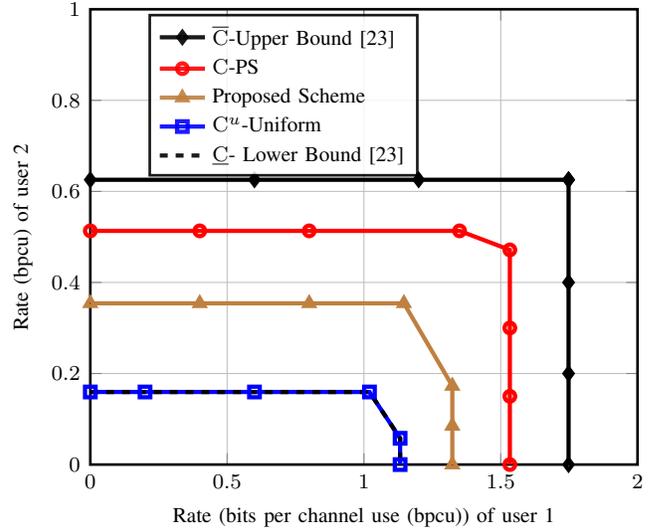
\begin{figure}[t!]
 	\centering
    \pgfplotsset{every axis/.append style={
		font=\footnotesize,
		line width=1pt,
		legend style={font=\footnotesize, at={(0.6,0.99)}},legend cell align=left},
} %
\pgfplotsset{compat=1.13}
	\begin{tikzpicture}
\begin{axis}[
xlabel near ticks,
ylabel near ticks,
grid=major,
xlabel={Rate (\ac{bpcu}) of user 1},
ylabel={Rate (\ac{bpcu}) of user 2},
yticklabel style={/pgf/number format},
width=1\linewidth,
legend entries={$\overline{\mathrm{C}}$-Upper Bound \cite{chaaban2017capacity}, $\mathrm{C}$-PS, Proposed Scheme, $\mathrm{C}^{u}$-Uniform, $\underline{\mathrm{C}}$- Lower Bound \cite{chaaban2017capacity}},
	xmin=0, xmax=2,
	ymin=0, ymax=1,
ylabel style={font=\footnotesize},
xlabel style={font=\footnotesize},
]
\addplot[black,solid,ultra thick,mark=diamond,mark size = 2pt,mark phase=1,mark repeat=1] table {Figures/CapacityRegion/Data_CapRegUB.dat};
\addplot[red,solid,ultra thick,mark=o,mark size = 2pt] table {Figures/CapacityRegion/Data_CapRegPS.dat};
\addplot[red,brown,ultra thick,mark=triangle,mark size = 2pt] table {Figures/CapacityRegion/Data_CapRegARnoma.dat};
\addplot[blue,solid,ultra thick,mark=square,mark size = 2pt] table {Figures/CapacityRegion/Data_CapRegUn.dat};
\addplot[black,dashed,ultra thick] table 
{Figures/CapacityRegion/Data_CapRegLB.dat};



\end{axis}
\end{tikzpicture}
 	\caption{Capacity region of \ac{NOMA} for $M=8$ and at \ac{OSNR} $P_{r1}/\sigma=6$ dB.} 
 	\label{fig:ch4:CapacityRegion}
\end{figure}

\begin{table}[b!]
  \caption{Performance gains of the $\scheme$ scheme.}
\centering
 \begin{tabular}{||c | c | c||}
 \hline
 Transmission Rate& User 1 Gap (dB) & User 2 Gap (dB) \\ [0.5ex] 
 \hline\hline
 0.75 & 1.3 & 2\\
 \hline
  1 & 0.5 & 1.5\\
 \hline
  2 & 0.5 & 0.9\\
 \hline
  2.5 & 0 & 0\\[1ex]
 \hline
\end{tabular}
\label{table:ch4:perf:gains}
\end{table}

The main performance benchmarks are as follows: 
\begin{itemize}
    \item Upper bound on the capacity of \ac{OMAC} channel denoted as $\overline{\mathrm{C}}_{j}$ and proposed in \cite{chaaban2017capacity}.
    \item Capacity of \ac{NOMA} denoted as $\CapNOMA$, which is the maximum of power-constrained $\Roimac$ for all feasible input distributions. The calculation of $\CapNOMA$ is shown in Appendix \ref{apendix:ch4:CapDerivation}.
    \item Uniform-distribution scheme, where output symbols of \ac{DM} and symbols resulted from the parity bits have uniform distribution. 
    \item \Ac{PCM} scheme proposed in \cite{he2019probabilistically} and extended to \ac{NOMA} case. In the \ac{PCM} scheme, the output symbols of \ac{DM} have an optimized pairwise distribution with equal probabilities for two consecutive constellation symbols while the symbols resulted from the parity bits remain uniformly distributed.
    \item \ac{GS} scheme, where the spacing values between constellation symbols are optimized while maintaining a uniform distribution.
\end{itemize}


\begin{figure}[t!]
 	\centering
	\begin{subfigure}{0.96\columnwidth}
 		\centering
		\pgfplotsset{every axis/.append style={
		font=\footnotesize,
		line width=1pt,
       legend style={font=\footnotesize, at={(0.85,1.3)}},legend columns=2},
} %
\pgfplotsset{compat=1.13}
	\begin{tikzpicture}
\begin{axis}[
xlabel near ticks,
ylabel near ticks,
grid=major,
xlabel={\ac{OSNR}$:P_{r1}/\sigma$ (dB)},
ylabel={Rate (\ac{bpcu})},
yticklabel style={/pgf/number format},
width=\linewidth,
legend entries={$\overline{\mathrm{C}}_{j}$-Upper Bound \cite{chaaban2017capacity},$\CapNOMA$-PS,$\mathrm{T}_{j}\left(\mathbf{p}^{\star}_{j}\right)$,$\mathrm{T}^{u}_{j}$},
	xmin=0, xmax=15,
	ymin=0, ymax=3.2,
ylabel style={font=\footnotesize},
xlabel style={font=\footnotesize},
]
\addplot[black,solid,very thick,mark=diamond,mark size = 2pt] table {Figures/CapArTxvsOSNR1/Data_CapUpB1.dat};
\addplot[red,solid,very thick,mark=o,mark size = 2pt] table {Figures/CapArTxvsOSNR1/Data_CapPS1.dat};
\addplot[ultra thick, red,dashed,mark=x,mark size = 2pt] table {Figures/CapArTxvsOSNR1/Data_TxPS1.dat};
\addplot+[const plot,no marks,ultra thick, blue, dashed,mark=square,mark size = 2pt] table {Figures/CapArTxvsOSNR1/Data_TxUn1.dat};
 \node[label={20:{$\Rfec^{\star}\!\!=\!\!\frac{1}{2}$}}] (a) at (0.02,0.13) {};
 \node[label={25:{$\frac{4}{5}$}}] (a) at (3,0.44) {};
\node[label={25:{$\frac{5}{6}$}}] (a) at (8,1.75) {};
\node[label={25:{$\frac{9}{10}$}}] (a) at (12.5,2.65) {};
\end{axis}
\end{tikzpicture}
 		\caption{User 1}
            \vspace{0.5cm}
 		\label{fig:ch4:ARUser1}
 	\end{subfigure}
        
 	\begin{subfigure}{0.96\columnwidth}
 		\centering
        \pgfplotsset{every axis/.append style={
		font=\footnotesize,
		line width=1pt,
		legend style={font=\footnotesize, at={(0.53,0.97)}},legend cell align=left},
} %
\pgfplotsset{compat=1.13}
	\begin{tikzpicture}
\begin{axis}[
xlabel near ticks,
ylabel near ticks,
grid=major,
xlabel={\ac{OSNR}$:P_{r1}/\sigma$ (dB)},
ylabel={Rate (\ac{bpcu})},
yticklabel style={/pgf/number format},
width=\linewidth,
	xmin=6, xmax=20,
	ymin=0, ymax=3.2,
ylabel style={font=\footnotesize},
xlabel style={font=\footnotesize},
]
\addplot[black,solid,very thick,mark=diamond,mark size = 2pt] table {Figures/CapArTxvsOSNR2/Data_CapUpB2.dat};
\addplot[red,solid,very thick,mark=o,mark size = 2pt] table {Figures/CapArTxvsOSNR2/Data_CapPS2.dat};
\addplot[ultra thick, red, dashed,mark=x,mark size = 2pt] table {Figures/CapArTxvsOSNR2/Data_TxPS2.dat};
\addplot+[const plot,no marks,ultra thick, blue, dashed] table {Figures/CapArTxvsOSNR2/Data_TxUn2.dat};
 \node[label={25:{$\Rfec^{\star}\!\!\!=\!\!\frac{1}{3}$}}] (a) at (6,0.1) {};
 \node[label={25:{$\frac{4}{5}$}}] (a) at (9,0.53) {};
\node[label={25:{$\frac{8}{9}$}}] (a) at (14.5,1.95) {};
\node[label={25:{$\frac{9}{10}$}}] (a) at (18.5,2.65) {};
\end{axis}
\end{tikzpicture}
 		\caption{User 2}
 		\label{fig:ch4:ARUser2}
 	\end{subfigure}
 	\caption{Transmission rates vs \ac{OSNR} for $M=8$.} 
 	\label{fig:ch4:AR_vs_SNR}
 \end{figure}
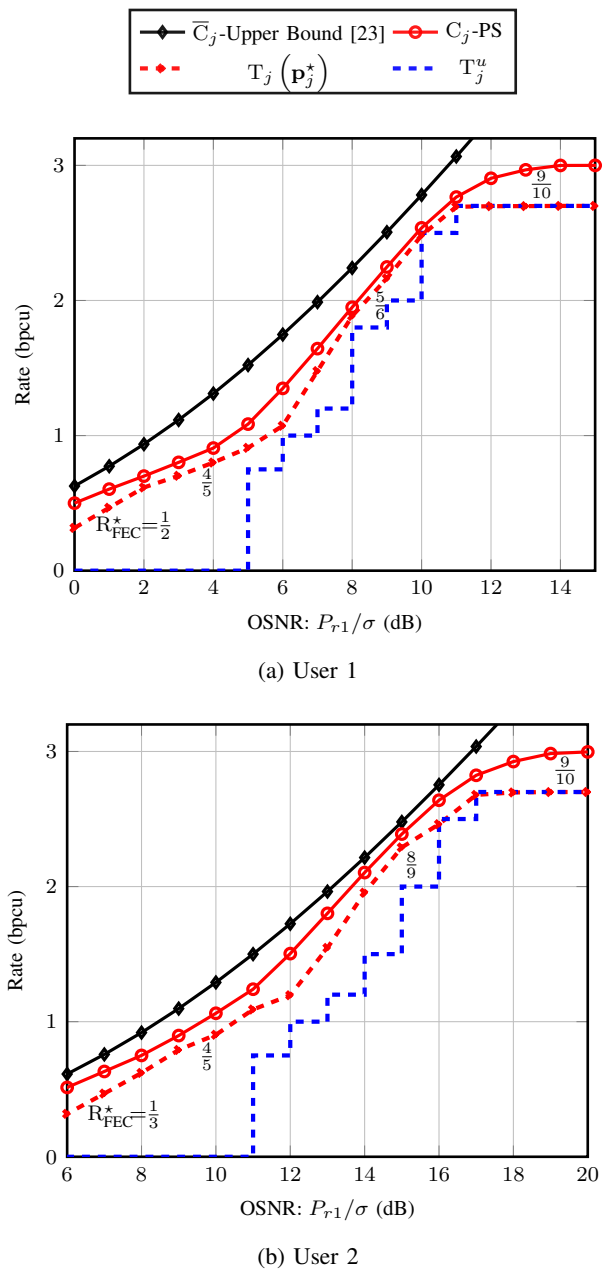%

In \cref{fig:ch4:CapacityRegion}, we present the capacity region characteristics in the \ac{NOMA} \ac{VLC} system for $M=8$ at \ac{OSNR} $P_{r1}/\sigma=6$ dB. The capacity region of \ac{NOMA}, denoted as $\mathrm{C}$, is separated from the theoretical upper bound on the capacity of the \ac{OMAC} channel, denoted as $\overline{\mathrm{C}}$, by a maximum gap of $0.2$ \ac{bpcu}. The maximum \ac{AR} region of the proposed scheme exhibits an improvement of approximately $0.2$ \ac{bpcu} over the capacity region of the uniform-distribution scheme, denoted as $\mathrm{C}^{u}$. The lower bound on the capacity region of the \ac{OMAC} channel, denoted as $\underline{\mathrm{C}}$, coincides with $\mathrm{C}^{u}$ since they both utilize a uniform distribution. As a result, $\mathrm{C}^{u}$ is regarded as a lower bound for the remainder of this section.

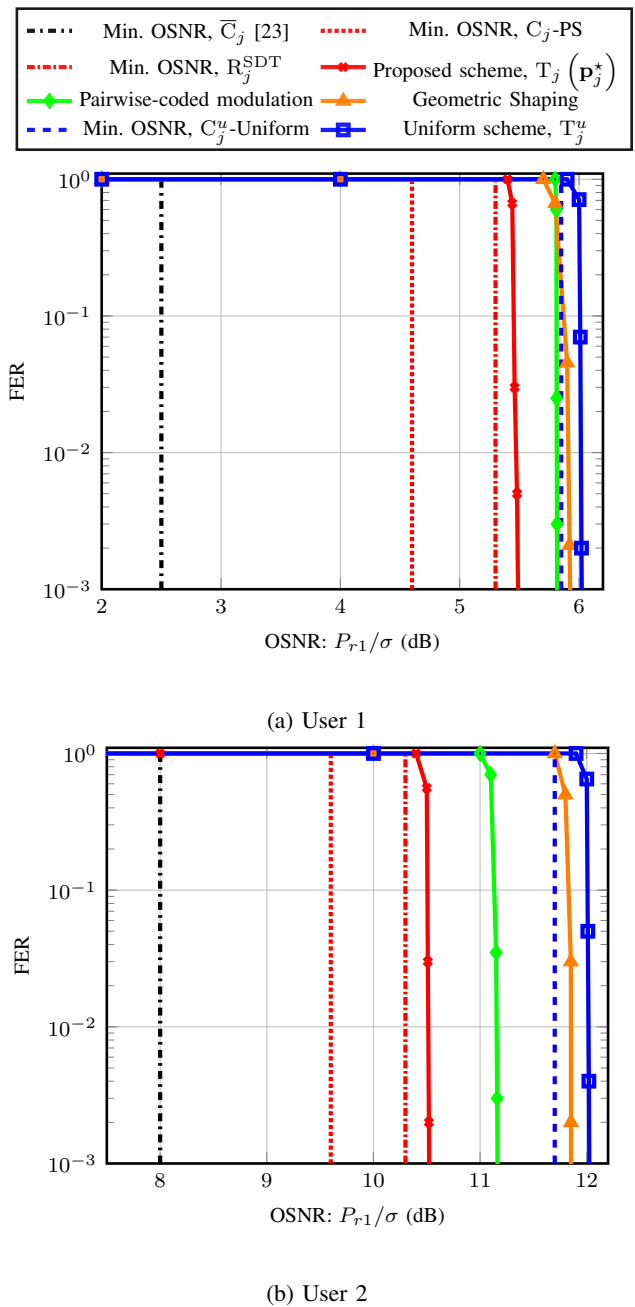
\begin{figure}[b!]
 	\centering
	\begin{subfigure}{0.95\columnwidth}
 		\centering
		  \pgfplotsset{every axis/.append style={
		font=\footnotesize,
		line width=1pt,
		legend style={font=\footnotesize, at={(1.06,1.4)}},legend columns=2},
} %
\pgfplotsset{compat=1.13}
	\begin{tikzpicture}
\begin{semilogyaxis}[
xlabel near ticks,
ylabel near ticks,
grid=major,
xlabel={\ac{OSNR}$:P_{r1}/\sigma$ (dB)},
ylabel={\ac{FER}},
width=0.98\linewidth, 
legend entries={Min. OSNR\mbox{,} $\overline{\mathrm{C}}_{j}$ \cite{chaaban2017capacity}, Min. OSNR\mbox{,} $\CapNOMA$-PS,Min. OSNR\mbox{,} $\mathrm{R}^{\mathrm{SDT}}_{j}$, Proposed scheme\mbox{,} $\mathrm{T}_{j}\left(\mathbf{p}^{\star}_{j}\right)$, \text{Pairwise-coded modulation},\text{Geometric Shaping},
Min. OSNR\mbox{,} $\mathrm{C}_{j}^{u}$-Uniform,Uniform scheme\mbox{,} $\mathrm{T}^{u}_{j}$},
	xmin=2, xmax=6.2,
	ymin=1e-3, ymax=1.1,
ylabel style={font=\footnotesize},
xlabel style={font=\footnotesize},
]
\addplot[black,dashdotted,ultra thick] coordinates {(2.5,1e-3) (2.5,1)};
\addplot[red,densely dotted,ultra thick] coordinates {(4.6,1e-3) (4.6,1)};
\addplot[red,densely dashdotted,ultra thick] coordinates {(5.3,1e-3) (5.3,0.2e-2)(5.3,0.5e-2) (5.3,1e-2) (5.3,0.2e-1) (5.3,1)};
\addplot[red,solid,ultra thick,mark=x,mark size = 2pt] table {Figures/FER1vsOSNR/Data_FER1.dat};
\addplot[green,solid,ultra thick,mark=diamond,mark size = 2pt] table {Figures/FER1vsOSNR/Data_FER1pw.dat};
\addplot[orange,solid,ultra thick,mark=triangle,mark size = 2pt] table {Figures/FER1vsOSNR/Data_FER1GS.dat};
\addplot[blue,dashed,ultra thick] coordinates {(5.85,1e-3) (5.85,1)};
\addplot[blue,solid,ultra thick,mark=square,mark size = 2pt] table {Figures/FER1vsOSNR/Data_FER1_u.dat};



\end{semilogyaxis}
\end{tikzpicture}
 		\caption{User 1}
            \label{fig:ch4:FERvsOSNRuser1}
 	\end{subfigure}
 	\begin{subfigure}{0.95\columnwidth}
 		\centering
        \pgfplotsset{every axis/.append style={
		font=\footnotesize,
		line width=1pt,
		legend style={font=\footnotesize, at={(0.75,0.95)}},legend cell align=left},
} %
\pgfplotsset{compat=1.13}
	\begin{tikzpicture}
\begin{semilogyaxis}[
xlabel near ticks,
ylabel near ticks,
grid=major,
xlabel={\ac{OSNR}$:P_{r1}/\sigma$ (dB)},
ylabel={\ac{FER}},
width=0.98\linewidth, 
	xmin=7.5, xmax=12.2,
	ymin=1e-3, ymax=1.1,
ylabel style={font=\footnotesize},
xlabel style={font=\footnotesize},
]
\addplot[black,dashdotted,ultra thick] coordinates {(8,1e-3) (8,1)};
\addplot[red,densely dotted,ultra thick] coordinates {(9.6,1e-3) (9.6,1)};
\addplot[red,densely dashdotted,ultra thick] coordinates {(10.3,1e-3) (10.3,0.2e-2)(10.3,0.5e-2) (10.3,1e-2) (10.3,0.2e-1) (10.3,1)};
\addplot[red,solid,ultra thick,mark=x,mark size = 2pt] table {Figures/FER2vsOSNR/Data_FER2.dat};
\addplot[green,solid,ultra thick,mark=diamond,mark size = 2pt] table {Figures/FER2vsOSNR/Data_FER2pw.dat};
\addplot[orange,solid,ultra thick,mark=triangle,mark size = 2pt] table {Figures/FER2vsOSNR/Data_FER2GS.dat};
\addplot[blue,dashed,ultra thick] coordinates {(11.7,1e-3) (11.7,1)};
\addplot[blue,solid,ultra thick,mark=square,mark size = 2pt] table {Figures/FER2vsOSNR/Data_FER2_u.dat};



\end{semilogyaxis}
\end{tikzpicture}
            \vspace{0.41cm}
 		\caption{User 2}
 		\label{fig:ch4:FERvsOSNRuser2}
 	\end{subfigure}
 	\caption{\ac{FER} of \ac{NOMA} for $M=8$ and rate $1$ \ac{bpcu}.} 
 	\label{fig:ch4:FERvsOSNR}
 \end{figure}%

\cref{fig:ch4:AR_vs_SNR} illustrates the capacity of \ac{NOMA}, $\CapNOMA$, its upper bound, $\overline{\mathrm{C}}_{j}$, \ac{TR} of the $\scheme$ scheme, $\mathrm{T}_{j}$, and \ac{TR} of the uniform-distribution scheme, $\mathrm{T}^{u}_{j}$, as a function of \ac{OSNR} for user $j$ and $M=8$. The optimal \ac{FEC} coding rate, $\Rfec^{\star}$, is adaptively adjusted based on the \ac{OSNR} for each user. The achieved performance gains of the $\scheme$ scheme, in terms of \ac{OSNR}, compared to the uniform-distribution scheme for different \acp{TR} of both users are summarized in \cref{table:ch4:perf:gains}. As depicted in \cref{fig:ch4:ARUser1}, $\mathrm{T}_{1}$ increases with the \ac{OSNR}. $\Rfec^{\star}$ also rises with \ac{OSNR} in order to maximize $\mathrm{T}_{1}$ while satisfying the rate constraint, i.e., $\mathrm{T}_{1}\leq\mathrm{R}^{\mathrm{SDT}}_{1}$, where $\RsdtoptOne$ is the \ac{AR} of user $1$. For example, $\Rfec^{\star}$ is $1/2$ at $1$ dB and increases to a maximum value of $9/10$ at $12$ dB. The optimum \ac{FEC} rate for the uniform-distribution scheme is also chosen to achieve maximum $\mathrm{T}^{u}_{1}$, but still ensures that it is below the capacity of the uniform signaling. As can be seen from \cref{fig:ch4:ARUser1}, at the \ac{OSNR} region $2-4$ dB, the $\scheme$ scheme operates within $0.05$ \ac{bpcu} and $0.25$ \ac{bpcu} from the capacity and its upper bound, respectively. Furthermore, at a rate of $0.75$ \ac{bpcu}, the $\scheme$ scheme outperforms uniform-distribution scheme by around $1.3$ dB. However, at high \ac{OSNR}, the gap between the \acp{TR} of the $\scheme$ and uniform-distribution schemes narrows for a fixed value of $M$. This is due to the fact that the conditional entropy of $X_{j} | \widehat{Y}_{j}$ tends to zero at high \ac{OSNR}, leaving only the entropy of $X_{j}$ in the \ac{AR} of user $j$. This entropy is maximized with a uniform distribution. Thus, it is recommended to switch to a higher $M$ for higher \ac{OSNR}.

\cref{fig:ch4:ARUser2} also shows the increase in $\Rfec^{\star}$ as a function of \ac{OSNR} for user $2$. The reason is similar as in the user $1$. At the \ac{OSNR} values of $6$ dB and $15$ dB, $\Rfec^{\star}$ is $1/3$ and $8/9$, respectively. Within the \ac{OSNR} range of $7-9$ dB, \cref{fig:ch4:ARUser2} demonstrates that the difference between the \ac{TR} of $\scheme$ scheme and its capacity, as well as its upper bound, is about $0.1$ \ac{bpcu} and $0.3$ \ac{bpcu}, respectively. Additionally, at a rate of $0.75$ \ac{bpcu}, the $\scheme$ scheme outperforms the uniform-distribution scheme by approximately $2$ dB. To achieve similar gains at higher \ac{OSNR}, it is recommended to switch to a higher value of $M$. It is worth noting that the accuracy of the capacity upper bound varies for different users, as evident from \cref{fig:ch4:AR_vs_SNR}. For example, the capacity upper bound provides a more accurate estimation for the user $2$ than for the user $1$.

To evaluate the performance of the $\scheme$ scheme, a comparison is made with the uniform-distribution, \ac{PCM}, and \ac{GS} schemes in terms of \ac{FER}. Monte Carlo simulations are conducted, and the results are presented in \cref{fig:ch4:FERvsOSNR}. The \acp{TR} of users $1$ and $2$ are fixed at $1$ \ac{bpcu}, and the decoding utilizes the \ac{MAP} values as computed in \eqref{eq:ch4:LLR}. The \ac{FER} of user $1$ is analyzed in \cref{fig:ch4:FERvsOSNRuser1}, and the minimum \acp{OSNR} required to achieve a \ac{TR} of $1$ \ac{bpcu} can be determined from \cref{fig:ch4:ARUser1}. The capacity upper bound, $\overline{\mathrm{C}}_{j}$, capacity of \ac{NOMA}, $\CapNOMA$, \ac{AR} of the $\scheme$ scheme, $\mathrm{R}^{\mathrm{SDT}}_{j}$, and capacity of \ac{NOMA} with uniform signaling, $\mathrm{C}_{j}^{u}$, provide reference points for these minimum \acp{OSNR}. Specifically, the respective values for user $1$ are $2.5$ dB, $4.6$ dB, $5.3$ dB, and $5.85$ dB. Furthermore, the $\scheme$ scheme employs the optimal distribution, constellation spacing, and \ac{FEC} rate, which are $\mathbf{p}^{\star}_{1}= \left[0.5297, 0, 0, 0, 0.1709, 0, 0.0168, 0.2825\right]$, $\Delta^{\star}_{1}=1.0042\times 10^{-5}$, and $\Rfec^{\star}=3/5$, respectively. For comparison, the \ac{PCM} scheme utilizes $\mathbf{p}^{pcm}_{1}=\left[0.2949, 0.2949, 0, 0, 0, 0, 0.2051,   0.2051\right]$, $\Delta^{pcm}_{1}=9.74\times 10^{-6}$, and $\Rfec^{pcm}=1/2$, while the \ac{GS} scheme optimizes the spacing values to $\Delta^{GS}_{1}=\left[0.009, 4.566, 1.400, 4.798, 0.005, 0.002, 0.002, 0.001\right]\times10^{-5}$. The uniform-distribution scheme uses $\mathbf{p}^{u}_{1}$, $\Delta^{u}_{1}= \left(2 P_{r1}\right)/(M+1)$, and $\Rfec^{u}=1/3$ as its parameters. It is worth noting that the \ac{FER} exhibits a phase transition phenomenon around the \ac{OSNR}, corresponding to the \ac{TR}. For example, at a \ac{FER} of $10^{-3}$, the $\scheme$ scheme outperforms the \ac{PCM}, \ac{GS}, and uniform-distribution schemes by approximately $0.3$ dB, $0.4$ dB, and $0.5$ dB, respectively. In \cref{fig:ch4:FERvsOSNRuser2}, we examine the minimum \acp{OSNR} required for user $2$ to achieve a \ac{TR} of $1$ \ac{bpcu}, which are derived from \cref{fig:ch4:ARUser2}. The respective minimum \ac{OSNR} values are $8$ dB, $9.6$ dB, $10.3$ dB, and $17.7$ dB for the capacity upper bound \cite{chaaban2017capacity}, capacity of \ac{NOMA}, \ac{AR} of the $\scheme$ scheme, and capacity of \ac{NOMA} with uniform signalling. Furthermore, we determine the optimal parameters for user $2$ in the $\scheme$ scheme, which are $\mathbf{p}^{\star}_{2}=\left[0.7357, 0, 0, 0, 0, 0, 0, 0.2643\right]$, $\Delta^{\star}_{2}=3.5\times 10^{-6}$, and $\Rfec^{\star}=3/4$. Similarly, the \ac{PCM} scheme employs $\mathbf{p}^{pcm}_{2}=\left[0.3659, 0.3659, 0, 0, 0, 0, 0.1341, 0.1341\right]$, $\Delta^{pcm}_{2}=3.63\times 10^{-6}$, and $\Rfec^{pcm}=1/2$, while the \ac{GS} scheme utilizes $\Delta^{GS}_{2}=\left[0, 0.1119, 0.0327, 0.1195, 0, 0, 0, 0\right]$ as the optimal spacing values for user $2$. We can observe that, at a \ac{FER} of $10^{-3}$, the $\scheme$ scheme exhibits a gain of $0.6$ dB, $1.3$ dB, and $1.5$ dB over the \ac{PCM}, \ac{GS}, and uniform-distribution schemes for user $2$, respectively.

\begin{figure}
 	\centering
	\begin{subfigure}{0.4\textwidth}
 		\centering
		\pgfplotsset{every axis/.append style={
		font=\normalsize,
		line width=1pt,
		legend style={font=\footnotesize, at={(0.98,0.8)}},legend cell align=left},
} %
\pgfplotsset{compat=1.13}
	\begin{tikzpicture}
\begin{axis}[
xlabel near ticks,
ylabel near ticks,
grid=major,
xlabel={$s_{1}$},
ylabel={$\mathbb{P}\{S_{1}=s_{1}\}$},
yticklabel style={/pgf/number format},
width=0.95\linewidth,
height=0.9\textwidth,
legend entries={\ac{OSNR}$=2$ dB,\ac{OSNR}$=8$ dB
	, \ac{OSNR}$=18$ dB},
	xmin=0, xmax=1.2e-4,
	ymin=0, ymax=1,
ylabel style={font=\large},
xlabel style={font=\large},
]
\addplot+[ycomb,red,mark=o, mark size = 3pt,solid,very thick] table {Figures/DistributionUser1/Data_user1Distr1.dat};
\addplot+[ycomb,blue,mark=x,mark size = 3pt,solid,very thick] table {Figures/DistributionUser1/Data_user1Distr2.dat};
\addplot+[ycomb,purple,mark=triangle,mark size = 3pt,solid,very thick] table {Figures/DistributionUser1/Data_user1Distr3.dat};



\end{axis}
\end{tikzpicture}
		\caption{User 1}
 		\label{fig:ch4:distr1}
 	\end{subfigure}
 	\begin{subfigure}{0.4\textwidth}
 		\centering
        \pgfplotsset{every axis/.append style={
		font=\normalsize,
		line width=1pt,
		legend style={font=\footnotesize, at={(0.98,0.8)}},legend cell align=left},
} %
\pgfplotsset{compat=1.13}
	\begin{tikzpicture}
\begin{axis}[
xlabel near ticks,
ylabel near ticks,
grid=major,
xlabel={$s_{2}$},
ylabel={$\mathbb{P}\{S_{2}=s_{2}\}$},
yticklabel style={/pgf/number format},
width=0.95\linewidth,
height=0.9\textwidth,
legend entries={\ac{OSNR}$=6$ dB,\ac{OSNR}$=14$ dB
	, \ac{OSNR}$=18$ dB},
	xmin=0, xmax=4.5e-5,
	ymin=0, ymax=1,
ylabel style={font=\large},
xlabel style={font=\large},
]
\addplot+[ycomb,red,mark=o, mark size = 3pt,solid,very thick] table {Figures/DistributionUser2/Data_user2Distr1.dat};
\addplot+[ycomb,blue,mark=x,mark size = 3pt,solid,very thick] table {Figures/DistributionUser2/Data_user2Distr2.dat};
\addplot+[ycomb,purple,mark=triangle,mark size = 3pt,solid,very thick] table {Figures/DistributionUser2/Data_user2Distr3.dat};



\end{axis}
\end{tikzpicture}
 		\caption{User 2}
 		\label{fig:ch4:distr2}
 	\end{subfigure}
        \caption{Optimized \ac{PMF} for the constellation symbols when $M=8$, for different instantaneous \acp{OSNR}.}
 	\label{fig:ch4:distr}
 \end{figure}
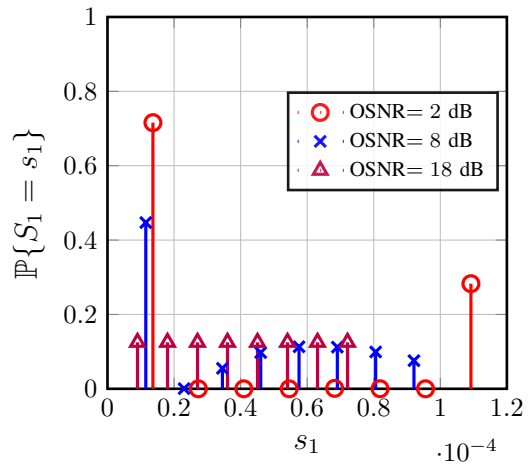
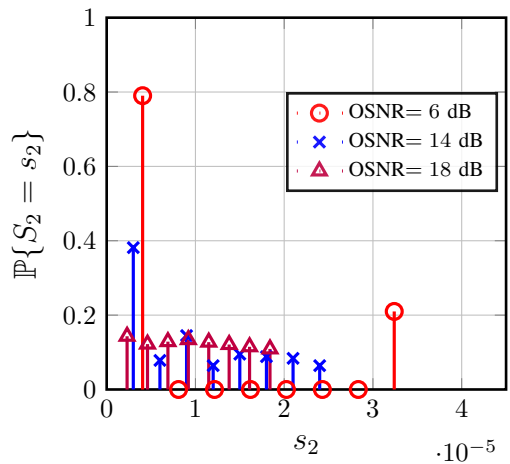%

\cref{fig:ch4:distr} provides a visualization of the optimized constellation symbols in the $\scheme$ scheme for each user at different \acp{OSNR}. At low \acp{OSNR}, as shown in \cref{fig:ch4:distr1} and \cref{fig:ch4:distr2}, symbols with the lowest amplitudes are assigned the highest probabilities. This assignment results in a larger inter-symbol spacing $\Delta_{j}^{\star}$, which satisfies the average optical power constraint and facilitates symbol detection in the presence of noise and interference. For example, the first and last constellation symbols of user $1$ have probabilities of approximately $\mathbb{P}\{S_{1}=\Delta_{1}^{\star}\}=0.7$ and $\mathbb{P}\{S_{1}=8\Delta_{1}^{\star}\}=0.3$ at the \ac{OSNR} of $2$ dB, while the probabilities of the remaining constellation symbols are negligible at the same \ac{OSNR}. However, as the \acp{OSNR} increase, the distribution of constellation symbols becomes nearly uniform, causing the inter-symbol spacing to reduce. This behavior is evident from \cref{fig:ch4:AR_vs_SNR}, where the gap between $\CapNOMA$ and $\CapNOMA^{u}$ diminishes at high \acp{OSNR}. Additionally, it is worth noting that the spacing between constellation points in \cref{fig:ch4:distr2} is smaller than in \cref{fig:ch4:distr1} due to the lower received power of user $2$ relative to user $1$.

\begin{figure}
 	\centering
    \pgfplotsset{every axis/.append style={
		font=\footnotesize,
		line width=1pt,
       legend style={font=\footnotesize, at={(0.99,0.3)}},legend cell align=left},
} %
\pgfplotsset{compat=1.13}
	\begin{tikzpicture}
\begin{axis}[
xlabel near ticks,
ylabel near ticks,
grid=major,
xlabel={\ac{OSNR}$:P_{r1}/\sigma$ (dB)},
ylabel={Sum rate (\ac{bpcu})},
yticklabel style={/pgf/number format},
width=\linewidth,
legend entries={Proposed for N=3, Uniform for N=3, Proposed for N=2, Uniform for N=2},
	xmin=0, xmax=20,
	ymin=0, ymax=10,
ylabel style={font=\footnotesize},
xlabel style={font=\footnotesize},
]
\addplot[black,solid,very thick,mark=x,mark size = 2pt] table {Figures/CapacityN3/Data_CapSumPSN3.dat};
\addplot[orange,solid,very thick,mark=diamond,mark size = 2pt] table {Figures/CapacityN3/Data_CapSumUnN3.dat};
\addplot[red,solid,very thick,mark=triangle,mark size = 2pt] table {Figures/CapacityN3/Data_CapSumPS.dat};
\addplot[blue,solid,very thick,mark=square,mark size = 2pt] table {Figures/CapacityN3/Data_CapSumUn.dat};
\end{axis}
\end{tikzpicture}
 	\caption{Maximum \ac{ASR} for $M=8$.} 
 	\label{fig:ch4:CapacityN3}
\end{figure}

In \cref{fig:ch4:CapacityN3}, we compare the maximum \ac{ASR} of our proposed and the uniform-based schemes with $N=2$ and $N=3$ users for a broad range of \acp{OSNR}. The results clearly demonstrate that our proposed scheme consistently outperforms the uniform-based scheme for $N=2$, showcasing the efficacy of our algorithm. With the introduction of an additional user, the performance gap in terms of the sum rate slightly widens due to increased number of users. For example, at $\text{\ac{OSNR}}=6$ dB, the performance gaps between the proposed and uniform-based schemes are around $0.8$ \ac{bpcu} and $1$ \ac{bpcu} for $N=2$ and $N=3$, respectively. To achieve similar gains at higher \ac{OSNR}, it is recommended to switch to a higher value of $M$. 

\section{Conclusion}\label{sec:ch4:Conclusion}
In this paper, we proposed a novel adaptive coded \ac{PS}-based \ac{NOMA} scheme to enhance the \ac{SE} in multiuser uplink \ac{VLC} communication scenario. We derived the \ac{AR} and \ac{TR} of the $\scheme$ scheme. We introduced an alternate optimization algorithm to determine the capacity-approaching distribution for each user for the given \ac{OSNR}. This algorithm adaptively adjusts the channel coding rate and constellation spacing based on the \ac{OSNR}. Our numerical results demonstrate that the $\scheme$ scheme achieves capacity of \ac{NOMA} with fine granularity. Overall, the $\scheme$ scheme provides significant improvements in \ac{SE} and \ac{FER} over the \ac{PCM}, \ac{GS} and uniform-distribution schemes, making it a promising candidate for practical implementation in future \ac{VLC} systems.


\begin{appendices}\label{apendix:ch4:CapDerivation}
\section{Capacity of NOMA}\label{apendix:ch4:CapDerivation}
To determine capacity of \ac{NOMA}, it is necessary to jointly optimize the distribution and spacing of the constellation symbols for every user. Consequently, there are $N$ separate optimization problems that need to be solved sequentially. 

First, after employing \ac{SIC} technique, only the Gaussian-distributed noise remains in the received signal of the last decoded user, i.e., $\widehat{Y}_{N}$. Hence, considering the power constraints described in \cref{sec:ch4:SignalModel}, we can determine the capacity of user $N$ by solving the following optimization problem:
\begin{subequations}\label{optimization:ch4:NCapDistrib}
\begin{alignat}{2}
\mathrm{C}_{N}=&\underset{\Delta_{N},\mathbf{p}_{N}}{\operatorname{maximize}} \quad &&\!\!\!\! \mathrm{R}_{N}\!\left(\Delta_{N},\mathbf{p}_{N},\mathbf{a}^{\star}_{\widehat{y}_{N}}\right)\\
&\text{subject to} \quad && \!\!\!\!\! \mathbf{s}^{T}_{N}\;\mathbf{p}_{N}\!=\!P_{rN}, \\
&\quad &&\!\!\!\!\! \frac{P_{rN}}{M}\leq\Delta_{N}\!\leq\!P_{rN},\\
& \quad &&\!\!\!\!\!  \sum_{i=1}^{M} p^{i}_{N}=1,  \\
& \quad &&\!\!\!\!\!   p^{i}_{N} \geq 0, \quad \forall i \in\mathbb{M}  \mbox{,}
\end{alignat}
\end{subequations}
where $\mathbf{a}^{\star}_{\widehat{y}_{N}}$ is an empty vector because there is no interference from other users in the received signal of the last decoded user, i.e., $\widehat{y}_{N}$, due to \ac{SIC}.

Considering the prior decoded user received signal $\widehat{Y}_{j<N}$, we observe the presence of interference from \ac{AWGN} and multiuser interference from users $j+1$ to $N$. To ensure optimal
distribution of the user $j<N$ signal, it is essential to account for this inter-user interference. For example, after obtaining $\mathbf{p}^{\star}_{N}$ and $\Delta^{\star}_{N}$ from \eqref{optimization:ch4:NCapDistrib}, the interference signal from user $N$ in $\widehat{Y}_{N-1}$ follows the distribution given by $\mathbf{p}^{\star}_{N}$ and has spacing value $\Delta^{\star}_{N}$. Thus, the capacity for user $j<N$ can be subsequently obtained by solving the following optimization problem:
\begin{subequations}\label{optimization:ch4:jCapDistrib}
\begin{alignat}{2}
\!\!\mathrm{C}_{j}=&\underset{\Delta_{j},\mathbf{p}_{j}}{\operatorname{maximize}} \quad && \!\!\!\mathrm{R}_{j}\!\left(\Delta_{j},\mathbf{p}_{j},\AIntrf^{\star}\right)\\
&\text{subject to} \quad && \!\!\!\!\! \mathbf{s}^{T}_{j}\;\mathbf{p}_{j}=P_{rj}, \\
&\quad &&\!\!\!\!\! \frac{P_{rj}}{M}\leq\Delta_{j}\!\leq\!P_{rj},\\
& \quad &&\!\!\!\!\!  \sum_{i=1}^{M} p^{i}_{j}=1,  \\
& \quad &&\!\!\!\!\!   p^{i}_{j} \geq 0, \quad \forall i \in\mathbb{M} \mbox{,}
\end{alignat}
\end{subequations}
where $\AIntrf^{\star}\triangleq\left[\DeltaIntf^{\star},\PdIntrf^{\star}\right]$. The optimal \acp{PMF} and constellation values that maximize capacity for each user are determined using the inverse decoding order of the \ac{SIC}, i.e., $N,N-1, \cdots, 1$.

\section{Deriving the KKT conditions}\label{apendix:ch4:KKT:conditions}
In the context of the convex optimization problem \eqref{optimization:ch4:MI:Tx:convex}, the Lagrangian function can be expressed as 
\begin{align}
    L&=-\mathrm{T}_{j}\left(\mathbf{p}_{j}\right)+\!\mu\!\!\left(\sum_{i=1}^{M} p^{i}_{j}-\!\!1\!\!\right)\!+\!\eta\!\left(\Rfec\sum_{i=1}^{M} \Delta\, i\, p^{i}_{j}\!-\!P_{j}\!\!\right)\nonumber \\
    &\phantom{=====}+\!\tau\left(\mathrm{T}_{j}\left(\mathbf{p}_{j}\right)-\Phip+\!\Rbf\right)\!\mbox{,}
\end{align}
where $P_{j}\triangleq P_{rj}-\left(1\!-\!\Rfec\right)\mathbf{s}^{T}_{j}\mathbf{p}^{u}_{j}$, and Lagrangian multipliers are denoted as $\mu$, $\eta$ and $\tau$. To derive the \ac{KKT} conditions, we differentiate the Lagrangian function $L$ with respect to $p^{i}_{j}$ and the Lagrangian multipliers $\mu$ and $\eta$, which are responsible for the equality constraints. Therefore, we have
\begin{align}
    \frac{dL}{dp^{i}_{j}}\!\!&=\!\!\Rfec\!\left(\log\!\left(p^{i}_{j}\right)\!+\!1\right)+\!\mu\!+\!\eta\, \Rfec \Delta\, i \nonumber \\
    &\phantom{========}+\!\tau\Rfec \Wiphat\!=\!0\mbox{,} 
    \label{eq:ch4:dL_dpi}
\end{align}
\begin{align}
    \frac{dL}{d\mu}=\sum_{i=1}^{M} p^{i}_{j}-1=0 \mbox{,}
    \label{eq:ch4:dL_dmu}
\end{align}
and
\begin{align}
    \frac{dL}{d\eta}=\Rfec\sum_{i=1}^{M} \Delta\, i\, p^{i}_{j}\!-\!P_{j}=0 \mbox{.}
    \label{eq:ch4:dL_eta}
\end{align}
The Lagrangian multiplier $\tau$, which is responsible for non-equality constraint, results in two conditions.\\
\textbf{Condition 1}: $\tau=0$\\
Equation \eqref{eq:ch4:dL_dpi} implies that
\begin{align}
    p^{i}_{j}=2^{-1-\frac{1}{\Rfec}\mu-\eta\,\Delta\, i}
    \mbox{.} 
    \label{eq:ch4:cond1_pi1}
\end{align}
By replacing $p^{i}_{j}$ in equation \eqref{eq:ch4:dL_dmu} with the expression from \eqref{eq:ch4:cond1_pi1}, $\mu$ becomes 
\begin{align}
    \mu=-\Rfec+\Rfec\log\left(\sum_{i^{\prime}=1}^{M} 2^{-\eta\,\Delta\, i^{\prime}}\right) \mbox{.}
    \label{eq:ch4:cond1_mu}
    \end{align}
Replacing $\mu$ in \eqref{eq:ch4:cond1_pi1} with \eqref{eq:ch4:cond1_mu}, $p^{i}_{j}$ is
\begin{align}
     p^{i}_{j}&=2^{-\eta\Delta i-\log\left(\sum\limits_{i^{\prime}=1}^{M} 2^{-\eta\,\Delta\, i^{\prime}}\right)}=\frac{2^{-\eta\,\Delta\, i}}{\sum\limits_{i^{\prime}=1}^{M} 2^{-\eta\,\Delta\, i^{\prime}}} \mbox{.} 
     \label{eq:ch4:cond1pif}
\end{align}
Due to the power constraint being an equality, the power constraint can be expressed as
\begin{align}
    \sum\limits_{i^{\prime}=1}^{M}\left(\Rfec\Delta\, i-\!P_{j}\right)2^{-\eta\Delta i}=0 \mbox{,}
    \label{eq:ch4:cond1_etaf}
\end{align}
where \eqref{eq:ch4:cond1_etaf} is derived by substituting $p^{i}_{j}$ in \eqref{eq:ch4:dL_eta} with \eqref{eq:ch4:cond1pif}.

Because $\tau=0$, the rate constraint should be satisfied with strict inequality, i.e., 
\begin{align}
&\mathrm{T}_{j}\left(\mathbf{p}_{j}\right)-\Phip+\!\Rbf<0 \nonumber \\
&\Rfec\!\sum\limits_{i=1}^{M}p^{i}_{j} \Wiphat\!+ \nonumber \\
&\phantom{======}+\!\left(\Rfec\!-\!1\right)\Ruopt\!\!+\!\Rbf<0 \mbox{.} 
\label{eq:ch4:cond1tau}
\end{align}
Substituting $p^{i}_{j}$ in \eqref{eq:ch4:cond1tau} with \eqref{eq:ch4:cond1pif} implies that
\begin{align}\label{eq:ch4:cond1tauf}
    \sum_{i=1}^{M} \left[\Rfec \Wiphat-D\right]2^{-\eta\,\Delta\, i}<0 \mbox{,} 
\end{align}
where $D=\left(1-\Rfec\right)\Ruopt-\Rbf$. \\

\textbf{Condition 2:} $\tau>0$ \\
Equation \eqref{eq:ch4:dL_dpi} implies that
\begin{align}
    p^{i}_{j}=2^{-1-\frac{1}{\Rfec}\mu-\eta\,\Delta\, i-\!\tau \Wiphat\!} \mbox{.}
    \label{eq:ch4:cond3_pi1}
\end{align}
Replacing $p^{i}_{j}$ in \eqref{eq:ch4:dL_dmu} with \eqref{eq:ch4:cond3_pi1}, $\mu$ becomes
\begin{align}
   \mu=\!-\Rfec\!+\!\Rfec\log\!\!\left(\sum_{i^{\prime}=1}^{M} 2^{-\eta\,\Delta\, i^{\prime}-\!\tau \Wiprimephat}\!\!\right)\!\!\mbox{.}
   \label{eq:ch4:cond3_mu}
\end{align}
Replacing $\mu$ in \eqref{eq:ch4:cond3_pi1} with \eqref{eq:ch4:cond3_mu}, $p^{i}_{j}$ is
\begin{align}
     p^{i}_{j}&=2^{-\eta\Delta i-\!\tau \Wiphat-\log\left(\sum\limits_{i^{\prime}=1}^{M} 2^{-\eta\,\Delta\, i^{\prime}-\!\tau \Wiprimephat}\right)}\nonumber \\
     &=\frac{2^{-\eta\,\Delta\, i-\!\tau \Wiphat}}{\sum\limits_{i^{\prime}=1}^{M} 2^{-\eta\,\Delta\, i^{\prime}-\!\tau \Wiprimephat}} \mbox{.}
     \label{eq:ch4:cond3pif}
\end{align}

Considering the power constraint as an equality, the following equation should be satisfied 
\begin{align}
    \sum\limits_{i^{\prime}=1}^{M}\left(\Rfec\Delta\, i-\!P_{j}\right)2^{-\eta\,\Delta\, i-\!\tau \Wiphat}=0 \mbox{,}
    \label{eq:ch4:cond3_etaf}
\end{align}
where \eqref{eq:ch4:cond3_etaf} is derived by substituting $p^{i}_{j}$ in \eqref{eq:ch4:dL_eta} with \eqref{eq:ch4:cond3pif}.

When $\tau>0$, the rate constraint is required to be satisfied with equality, i.e., 
\begin{align}
&\Rfec\!\sum\limits_{i=1}^{M}p^{i}_{j} \Wiphat\!+ \nonumber \\
&\phantom{======}+\!\left(\Rfec\!-\!1\right)\Ruopt\!\!+\!\Rbf=0 \mbox{.} 
\label{eq:ch4:cond3tau}
\end{align}
Substituting $p^{i}_{j}$ in \eqref{eq:ch4:cond3tau} with \eqref{eq:ch4:cond3pif} implies
\begin{align}\label{eq:ch4:cond3tauf}
    \!\!\!\sum_{i=1}^{M}\!\left[\Rfec \Wiphat\!-\!\!D\right]\frac{2^{-\eta\,\Delta\, i}}{2^{\!\tau \Wiphat}}\!=\!0 \mbox{.} 
\end{align}
\end{appendices}

\bibliographystyle{IEEEtran}
\bibliography{ref.bib}

\begin{thebibliography}{10}
\providecommand{\url}[1]{#1}
\csname url@samestyle\endcsname
\providecommand{\newblock}{\relax}
\providecommand{\bibinfo}[2]{#2}
\providecommand{\BIBentrySTDinterwordspacing}{\spaceskip=0pt\relax}
\providecommand{\BIBentryALTinterwordstretchfactor}{4}
\providecommand{\BIBentryALTinterwordspacing}{\spaceskip=\fontdimen2\font plus
\BIBentryALTinterwordstretchfactor\fontdimen3\font minus \fontdimen4\font\relax}
\providecommand{\BIBforeignlanguage}[2]{{%
\expandafter\ifx\csname l@#1\endcsname\relax
\typeout{** WARNING: IEEEtran.bst: No hyphenation pattern has been}%
\typeout{** loaded for the language `#1'. Using the pattern for}%
\typeout{** the default language instead.}%
\else
\language=\csname l@#1\endcsname
\fi
#2}}
\providecommand{\BIBdecl}{\relax}
\BIBdecl

\bibitem{khalighi2014survey}
M.~A. Khalighi and M.~Uysal, ``Survey on free space optical communication: A communication theory perspective,'' \emph{IEEE communications surveys \& tutorials}, vol.~16, no.~4, pp. 2231--2258, Jun 2014.

\bibitem{marshoud2018optical}
H.~Marshoud, S.~Muhaidat, P.~C. Sofotasios, S.~Hussain, M.~A. Imran, and B.~S. Sharif, ``Optical non-orthogonal multiple access for visible light communication,'' \emph{IEEE Wireless Communications}, vol.~25, no.~2, pp. 82--88, Apr 2018.

\bibitem{mathur2021survey}
H.~Mathur and T.~Deepa, ``A survey on advanced multiple access techniques for 5\uppercase{G} and beyond wireless communications,'' \emph{Wireless Personal Communications}, vol. 118, pp. 1775--1792, Jan 2021.

\bibitem{forney1984efficient}
G.~Forney, R.~Gallager, G.~Lang, F.~Longstaff, and S.~Qureshi, ``Efficient modulation for band-limited channels,'' \emph{IEEE journal on selected areas in communications}, vol.~2, no.~5, pp. 632--647, Sep 1984.

\bibitem{wang2023transmit}
Q.~Wang, X.~Pang, C.~Wu, L.~Xu, N.~Zhao, and F.~R. Yu, ``Transmit power minimization for \uppercase{STAR}-\uppercase{RIS} aided \uppercase{FD}-\uppercase{NOMA} networks,'' \emph{IEEE Transactions on Vehicular Technology}, Oct 2023.

\bibitem{abdelhady2020visible}
A.~M. Abdelhady, A.~K.~S. Salem, O.~Amin, B.~Shihada, and M.-S. Alouini, ``Visible light communications via intelligent reflecting surfaces: Metasurfaces vs mirror arrays,'' \emph{IEEE Open Journal of the Communications Society}, vol.~2, pp. 1--20, Dec 2020.

\bibitem{kschischang1993optimal}
F.~R. Kschischang and S.~Pasupathy, ``Optimal nonuniform signaling for \uppercase{G}aussian channels,'' \emph{IEEE Transactions on Information Theory}, vol.~39, no.~3, pp. 913--929, May 1993.

\bibitem{cho2019probabilistic}
J.~Cho and P.~J. Winzer, ``Probabilistic constellation shaping for optical fiber communications,'' \emph{Journal of Lightwave Technology}, vol.~37, no.~6, pp. 1590--1607, Feb 2019.

\bibitem{javed2021probabilistic}
S.~Javed, A.~Elzanaty, O.~Amin, B.~Shihada, and M.-S. Alouini, ``When probabilistic shaping realizes improper signaling for hardware distortion mitigation,'' \emph{IEEE Transactions on Communications}, vol.~69, no.~8, pp. 5028--5042, Apr 2021.

\bibitem{bocherer2015bandwidth}
G.~B{\"o}cherer, F.~Steiner, and P.~Schulte, ``Bandwidth efficient and rate-matched low-density parity-check coded modulation,'' \emph{IEEE Transactions on Communications}, vol.~63, no.~12, pp. 4651--4665, Oct 2015.

\bibitem{elzanaty2020adaptive}
A.~Elzanaty and M.-S. Alouini, ``Adaptive coded modulation for \uppercase{IM}/\uppercase{DD} free-space optical backhauling: A probabilistic shaping approach,'' \emph{IEEE Transactions on Communications}, vol.~68, no.~10, pp. 6388--6402, Jul 2020.

\bibitem{kafizov2022probabilistic}
A.~Kafizov, A.~Elzanaty, and M.-S. Alouini, ``Probabilistic shaping based spatial modulation for spectral-efficient \uppercase{VLC},'' \emph{IEEE Transactions on Wireless Communications}, Apr 2022.

\bibitem{feng2019joint}
S.~Feng, T.~Bai, and L.~Hanzo, ``Joint power allocation for the multi-user \uppercase{NOMA}-downlink in a power-line-fed \uppercase{VLC} network,'' \emph{IEEE Transactions on Vehicular Technology}, vol.~68, no.~5, pp. 5185--5190, Mar 2019.

\bibitem{sadat2022survey}
H.~Sadat, M.~Abaza, A.~Mansour, and A.~Alfalou, ``A survey of \uppercase{NOMA} for \uppercase{VLC} systems: Research challenges and future trends,'' \emph{Sensors}, vol.~22, no.~4, p. 1395, Feb 2022.

\bibitem{yin2016performance}
L.~Yin, W.~O. Popoola, X.~Wu, and H.~Haas, ``Performance evaluation of non-orthogonal multiple access in visible light communication,'' \emph{IEEE Transactions on Communications}, vol.~64, no.~12, pp. 5162--5175, Sept 2016.

\bibitem{lin2017experimental}
B.~Lin, Z.~Ghassemlooy, X.~Tang, Y.~Li, and M.~Zhang, ``Experimental demonstration of optical \uppercase{MIMO} \uppercase{NOMA}-\uppercase{VLC} with single carrier transmission,'' \emph{Optics Communications}, vol. 402, pp. 52--55, Jul 2017.

\bibitem{xiao2019hybrid}
Y.~Xiao, P.~D. Diamantoulakis, Z.~Fang, Z.~Ma, L.~Hao, and G.~K. Karagiannidis, ``Hybrid lightwave/\uppercase{RF} cooperative \uppercase{NOMA} networks,'' \emph{IEEE Transactions on Wireless Communications}, vol.~19, no.~2, pp. 1154--1166, Nov 2019.

\bibitem{obeed2020user}
M.~Obeed, H.~Dahrouj, A.~M. Salhab, S.~A. Zummo, and M.-S. Alouini, ``User pairing, link selection, and power allocation for cooperative \uppercase{NOMA} hybrid \uppercase{VLC}/\uppercase{RF} systems,'' \emph{IEEE Transactions on Wireless Communications}, vol.~20, no.~3, pp. 1785--1800, Mar 2020.

\bibitem{xu2020investigation}
R.~Xu, Z.~Wang, J.~Yu, and S.~Han, ``Investigation on probabilistic shaping for symbol-level uplink non-orthogonal multiple access visible light communication systems,'' in \emph{Proc. of the 20th International Conference on Communication Technology (ICCT), 28-31 October 2020, Nanning, China}.\hskip 1em plus 0.5em minus 0.4em\relax IEEE, Dec 2020, pp. 605--610.

\bibitem{schulte2015constant}
P.~Schulte and G.~B{\"o}cherer, ``Constant composition distribution matching,'' \emph{IEEE Transactions on Information Theory}, vol.~62, no.~1, pp. 430--434, Nov 2015.

\bibitem{lange2000optimization}
K.~Lange, D.~R. Hunter, and I.~Yang, ``Optimization transfer using surrogate objective functions,'' \emph{Journal of computational and graphical statistics}, vol.~9, no.~1, pp. 1--20, Feb 2000.

\bibitem{vontobel2008generalization}
P.~O. Vontobel, A.~Kavcic, D.~M. Arnold, and H.-A. Loeliger, ``A generalization of the \uppercase{B}lahut-\uppercase{A}rimoto algorithm to finite-state channels,'' \emph{IEEE Transactions on Information Theory}, vol.~54, no.~5, pp. 1887--1918, Apr 2008.

\bibitem{chaaban2017capacity}
A.~Chaaban, O.~M.~S. Al-Ebraheemy, T.~Y. Al-Naffouri, and M.-S. Alouini, ``Capacity bounds for the \uppercase{G}aussian \uppercase{IM}-\uppercase{DD} optical multiple-access channel,'' \emph{IEEE Transactions on Wireless Communications}, vol.~16, no.~5, pp. 3328--3340, Mar 2017.

\bibitem{yin2015performance}
L.~Yin, X.~Wu, and H.~Haas, ``On the performance of non-orthogonal multiple access in visible light communication,'' in \emph{Proc. of the 26th Annual International Symposium on Personal, Indoor, and Mobile Radio Communications (PIMRC), 30 Aug - 2 Sept 2015, Hong Kong, China}.\hskip 1em plus 0.5em minus 0.4em\relax IEEE, Dec 2015, pp. 1354--1359.

\bibitem{fath2012performance}
T.~Fath and H.~Haas, ``Performance comparison of \uppercase{MIMO} techniques for optical wireless communications in indoor environments,'' \emph{IEEE Transactions on Communications}, vol.~61, no.~2, pp. 733--742, Dec 2012.

\bibitem{mesleh2011optical}
R.~Mesleh, H.~Elgala, and H.~Haas, ``Optical spatial modulation,'' \emph{IEEE/OSA Journal of Optical Communications and Networking}, vol.~3, no.~3, pp. 234--244, Mar 2011.

\bibitem{farid2009channel}
A.~A. Farid and S.~Hranilovic, ``Channel capacity and non-uniform signalling for free-space optical intensity channels,'' \emph{IEEE Journal on Selected Areas in Communications}, vol.~27, no.~9, pp. 1553--1563, Dec 2009.

\bibitem{zhang2016uplink}
N.~Zhang, J.~Wang, G.~Kang, and Y.~Liu, ``Uplink nonorthogonal multiple access in 5\uppercase{G} systems,'' \emph{IEEE Communications Letters}, vol.~20, no.~3, pp. 458--461, Jan 2016.

\bibitem{Dvbs2x:15}
``{White paper on the use of DVB-S2X for DTH applications, DSNG \& professional services, broadband interactive services and VL-SNR applications},'' Digital Video Broadcasting, Tech. Rep., Mar 2015.

\bibitem{biglieri2005coding}
E.~Biglieri, \emph{Coding for wireless channels}.\hskip 1em plus 0.5em minus 0.4em\relax Springer Science \& Business Media, New York, 2005.

\bibitem{bocherer2019probabilistic}
G.~B{\"o}cherer, P.~Schulte, and F.~Steiner, ``Probabilistic shaping and forward error correction for fiber-optic communication systems,'' \emph{Journal of Lightwave Technology}, vol.~37, no.~2, pp. 230--244, Jan 2019.

\bibitem{potra2000interior}
F.~A. Potra and S.~J. Wright, ``Interior-point methods,'' \emph{Journal of Computational and Applied Mathematics}, vol. 124, no. 1-2, pp. 281--302, Nov 2000.

\bibitem{he2019probabilistically}
Z.~He, T.~Bo, and H.~Kim, ``Probabilistically shaped coded modulation for \uppercase{IM}/\uppercase{DD} system,'' \emph{Optics express}, vol.~27, no.~9, pp. 12\,126--12\,136, Apr 2019.

\end{thebibliography}

\begin{IEEEbiography}[{\includegraphics[width=1in,height=1.25in,clip,keepaspectratio]{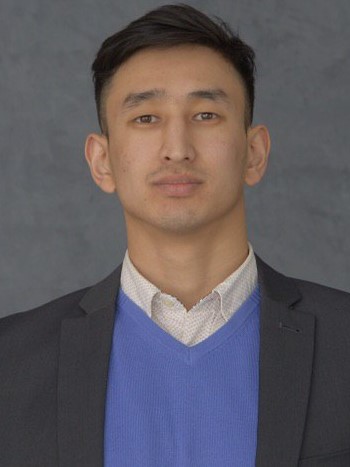}}]{Amanat Kafizov} received his B.Sc. and M.Sc. degrees in Electrical and Computer Engineering from Nazarbayev University, Kazakhstan, in 2019 and from King Abdullah University of Science and Technology (KAUST), Saudi Arabia, in 2021, respectively. Currently, he is a Ph.D. student at KAUST. His academic journey has been enriched by diverse experiences, including a summer research internship at the University at Buffalo, New York, USA, in 2018, where he developed OFDM system for Terahertz communication. Later, from 2023 to 2024, he worked as a research engineer at CERN, Geneva, Switzerland, focusing on the high granularity calorimeter (HGCAL). His research interests include visible light communication, reconfigurable intelligent surfaces, coded modulation, uneven distribution, and machine learning.
\end{IEEEbiography}

\begin{IEEEbiography}[{\includegraphics[width=1in,height=1.25in,clip,keepaspectratio]{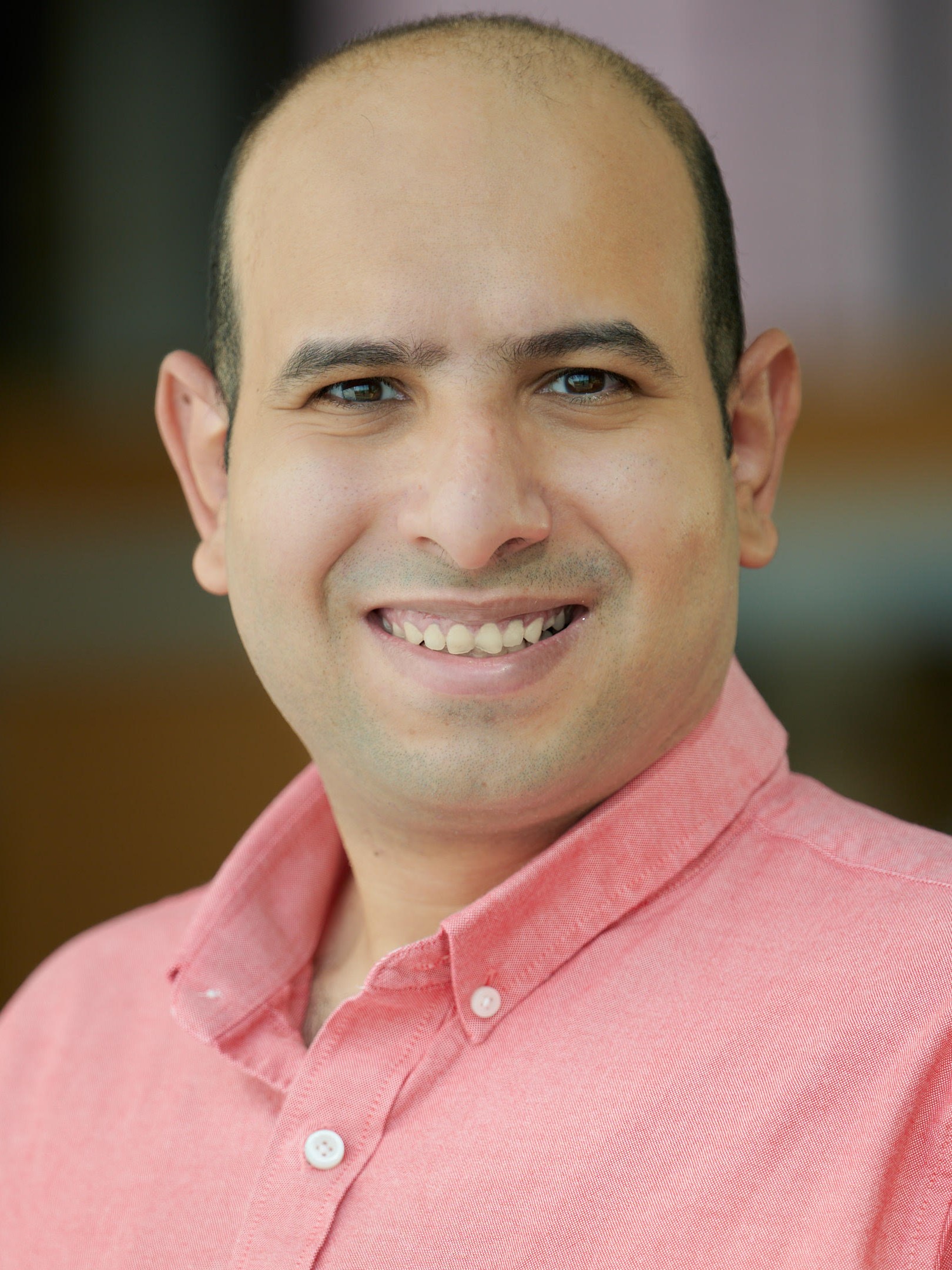}}]{Ahmed Elzanaty} (S12-M18-SM21) received the
Ph.D. degree (excellent cum laude) in electronics,
telecommunications, and information technology
from the University of Bologna, Italy, in 2018,
where he was a Research Fellow from 2017 to
2019. He was a Postdoctoral Fellow with
the King Abdullah University of Science and
Technology, Saudi Arabia. He is currently a Lecturer (Assistant Professor) at the Institute for Communication Systems, University of Surrey, United Kingdom. He has participated in several national and European projects, such as GRETA and EuroCPS. His research interests include the design and performance analysis
of wireless communications and localization systems, cellular network design
with EMF constraints, coded modulation, wireless localization, compressive
sensing, and distributed training of neural networks.
\end{IEEEbiography}

\begin{IEEEbiography}[{\includegraphics[width=1.1in,height=1.25in,clip,keepaspectratio]{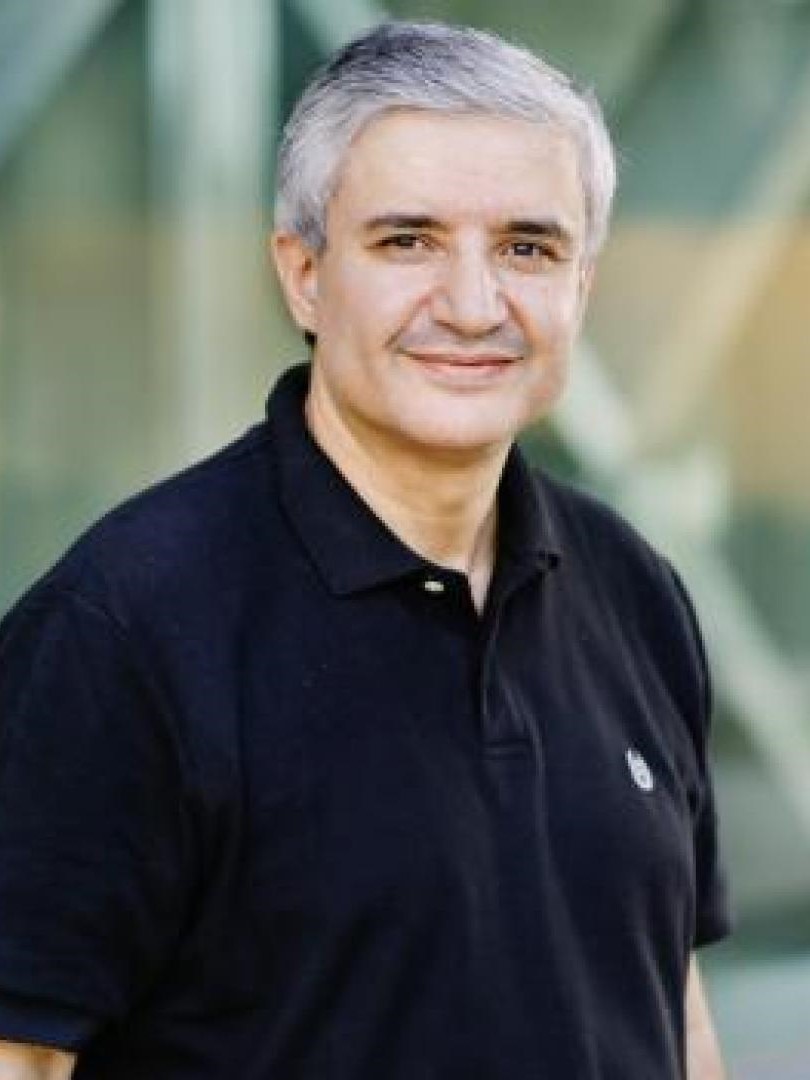}}]{Mohamed-Slim Alouini}
(S94-M98-SM03-F09) was born in Tunis, Tunisia. He received the Ph.D.
degree in Electrical Engineering from the California Institute of Technology (Caltech) in 1998. He served
as a faculty member at the University of Minnesota then in the Texas A\&M
University at Qatar before joining in 2009 the King Abdullah
University of Science
and Technology (KAUST) where he is now the Al-Khawarizmi Distinguished
Professor of Electrical and Computer Engineering. Prof. Alouini is a Fellow
of the IEEE and OPTICA (Formerly the Optical Society of America (OSA)). He
is currently particularly interested in addressing the technical challenges
associated with the uneven distribution, access to, and use of information
and communication technologies in rural, low-income, disaster, and/or
hard-to-reach areas.
\end{IEEEbiography}

\end{document}